%% file: main.tex
\documentclass{IEEEtaes_edit}

\usepackage{color}
\usepackage{cite}
\usepackage{amsmath,amssymb,amsfonts}
\usepackage{amsthm}
\usepackage{algorithmic}
\usepackage{algorithm}
\usepackage{array}
\usepackage{graphicx}
\usepackage{textcomp}
\usepackage{stfloats}
\usepackage{url}
\usepackage{booktabs}
\usepackage{multirow}
\usepackage{subcaption}
\usepackage{tikz}
\usepackage{bm}
\usepackage{balance}
\usepackage{comment}
\usepackage{hyperref}

\providecommand{\subfloat}[2][]{\subcaptionbox{#1}{#2}}

\jvol{XX}
\jnum{XX}
\jmonth{XXXXX}
\paper{1234567}
\pubyear{2026}
\doiinfo{TAES.2026.Doi Number}
\newtheorem{theorem}{Theorem}
\newtheorem{lemma}[theorem]{Lemma}
\newtheorem{remark}{Remark}
\newtheorem{proposition}{Proposition}

\newcommand{\argmin}{\mathop{\rm arg~min}\limits}
\newlength{\FigWidth}
\providecommand{\BibTeX}{{\rm B\kern-.05em{\sc i\kern-.025em b}\kern-.08em
    T\kern-.1667em\lower.7ex\hbox{E}\kern-.125emX}}

\begin{document}

\title{
Decentralized Power-Optimal Coordination for Spacecraft Swarms Using Time-Varying Magnetorquer Actuation
}

\author{Yuta Takahashi}
\affil{Department of Mechanical Engineering, Institute of Science Tokyo, Tokyo, Japan}

\author{Shin-ichiro Sakai}
\affil{Department of Spacecraft Engineering, Japan Aerospace Exploration Agency, Kanagawa, Japan}

\receiveddate{Manuscript received ------, 2026; revised ------, ------; accepted ------, ------.\\
This work was not supported by any organization.
}
\authoraddress{Yuta Takahashi is with the Department of Mechanical Engineering, Institute of Science Tokyo, Tokyo, Japan (e-mail: \href{mailto:stateofyuta@gmail.com}{stateofyuta@gmail.com}). 
Shin-ichiro Sakai is with the Department of Spacecraft Engineering, Japan Aerospace Exploration Agency, Kanagawa, Japan (e-mail: \href{mailto:sakai@isas.jaxa.jp}{sakai@isas.jaxa.jp}).
}


\markboth{TAKAHASHI AND SAKAI}{DECENTRALIZED POWER-OPTIMAL COORDINATION FOR SPACECRAFT SWARMS}
\maketitle

\input{main_EMgrouping.tex}

\bibliographystyle{IEEEtran}
\bibliography{references}

\begin{IEEEbiography}[{\includegraphics[width=1in,height=1.25in,clip,keepaspectratio]{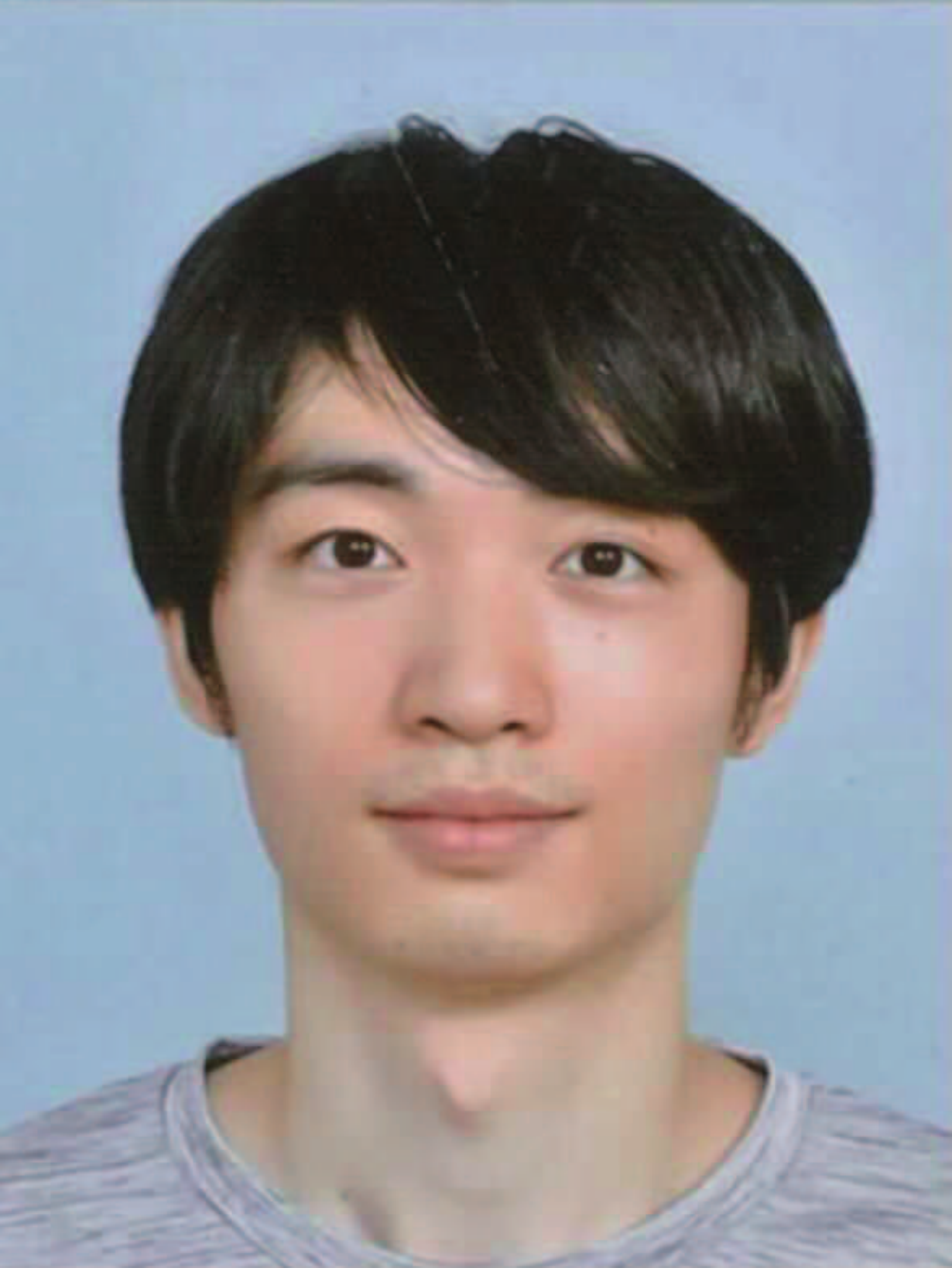}}]{Yuta Takahashi} received his B.S. and M.S. degrees in mechanical engineering (ME) in 2019 and 2021 (early graduations), respectively, from the Tokyo Institute of Technology, Japan. He was the recipient of the Best Student Awards 2017 and 2018 (school class rank 1 of 7/149) and 2019--2021 (school class rank 1/140) in ME at Tokyo Tech. He was a Visiting Student Researcher at the California Institute of Technology in 2022 and 2023. 

He is currently pursuing a Ph.D. degree in ME at the Institute of Science Tokyo, Japan, under the JAXA/ISAS research program. He has also been working at Astroscale as a spacecraft GNC engineer since 2026, designing GNC algorithms for commercial rendezvous-and-docking missions. His research fields include multi-agent and learning-based nonlinear control theory, with primary applications in distributed space and under-actuated systems. He led the avionics team for the space demonstration mission of the membrane-array antenna device ``5G/HELIOS'' from 2019 to 2022, with launches in 2023 and 2026. He collaborated with 
Interstellar Technologies Inc., providing the spacecraft swarm guidance and control algorithm from 2023 to 2026.  

Mr. Takahashi was the recipient of the Best Paper Finalist Award of the GNC Graduate Student Paper Competition at the AIAA SciTech Forum 2025, Best Paper Nominee of the 2nd Space Robotics Workshop at IEEE SMC-IT/SCC 2025, First Place Award at the MTT-Sat Challenge by the IEEE MTT-S in 2023, Miura prize from The Japan Society of Mechanical Engineers in 2021 (1 of 221/31000), Best Group Presentation Award of the AOTULE Conference, which he joined for his B.S. work in 2019, and two conference awards for collaborative research on the hardware implementation of his machine-learning algorithms.
\end{IEEEbiography}
\begin{IEEEbiography}[{\includegraphics[width=1in,height=1.25in,clip,keepaspectratio]{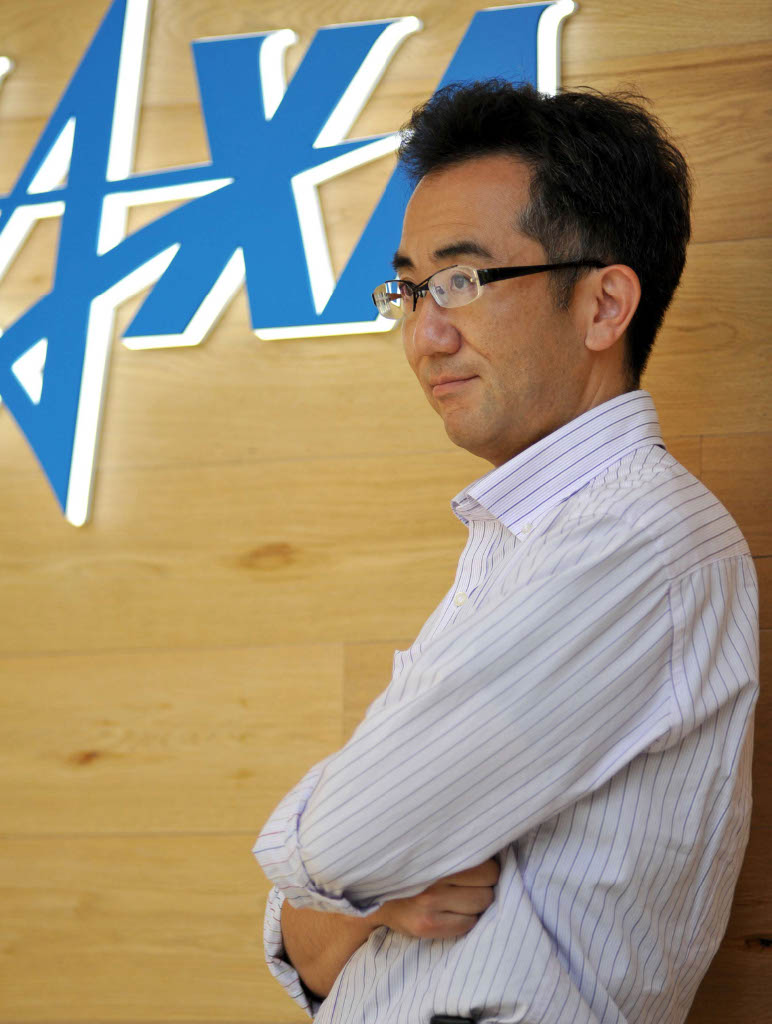}}]{Shin-ichiro Sakai}is a professor at the Institute of Space and Astronautical Science / Japan Aerospace Exploration Agency (JAXA). He received his Ph.D. in Electrical Engineering from the University of Tokyo in 2000. 
He designed attitude control systems such as REIMEI, ASTRO-G, HISAKI, and ARASE. 
He managed JAXA's ``Smart Lander for Investigating Moon'' for a pinpoint moon landing. 
\end{IEEEbiography}
\end{document}

%% file: main_EMgrouping.tex
\begin{abstract}
This paper presents a decentralized power-optimal coordination
framework for magnetically actuated spacecraft swarms. Swarms that form large space structures overcome the
aperture limit set by the launch vehicle and hold their shape on
solar-generated power alone. Magnetic actuation is propellant-free and generated by a magnetorquer, which is commonly used for attitude control. However, every spacecraft interacts with every other within range, and its effect depends on the actuation power and a carrier frequency. We therefore design a decentralized power-optimal framework to jointly derive the interaction graph, frequency grouping, and controller gains. Our decentralized controller preserves angular momentum, which is a nonholonomic constraint. Then, this framework for connected groups whose memberships overlap across carriers guarantees that the relative position
errors, the absolute attitude errors, and the imbalance of the
reaction-wheel momenta converge to the desired states under the decentralized power-optimal allocation. A closed-loop simulation of a thousand spacecraft with the complete alternating-current interaction confirms the framework.  A fast approximate integration with a proven error bound extends the framework to a long-horizon orbital reconfiguration held with high precision.
\end{abstract}
\begin{IEEEkeywords}
Spacecraft swarms, spacecraft formation flying, 
cooperative control, graph partitioning. 
\end{IEEEkeywords}
\begin{figure}[bt!]
\centering
\includegraphics[width=\linewidth]
{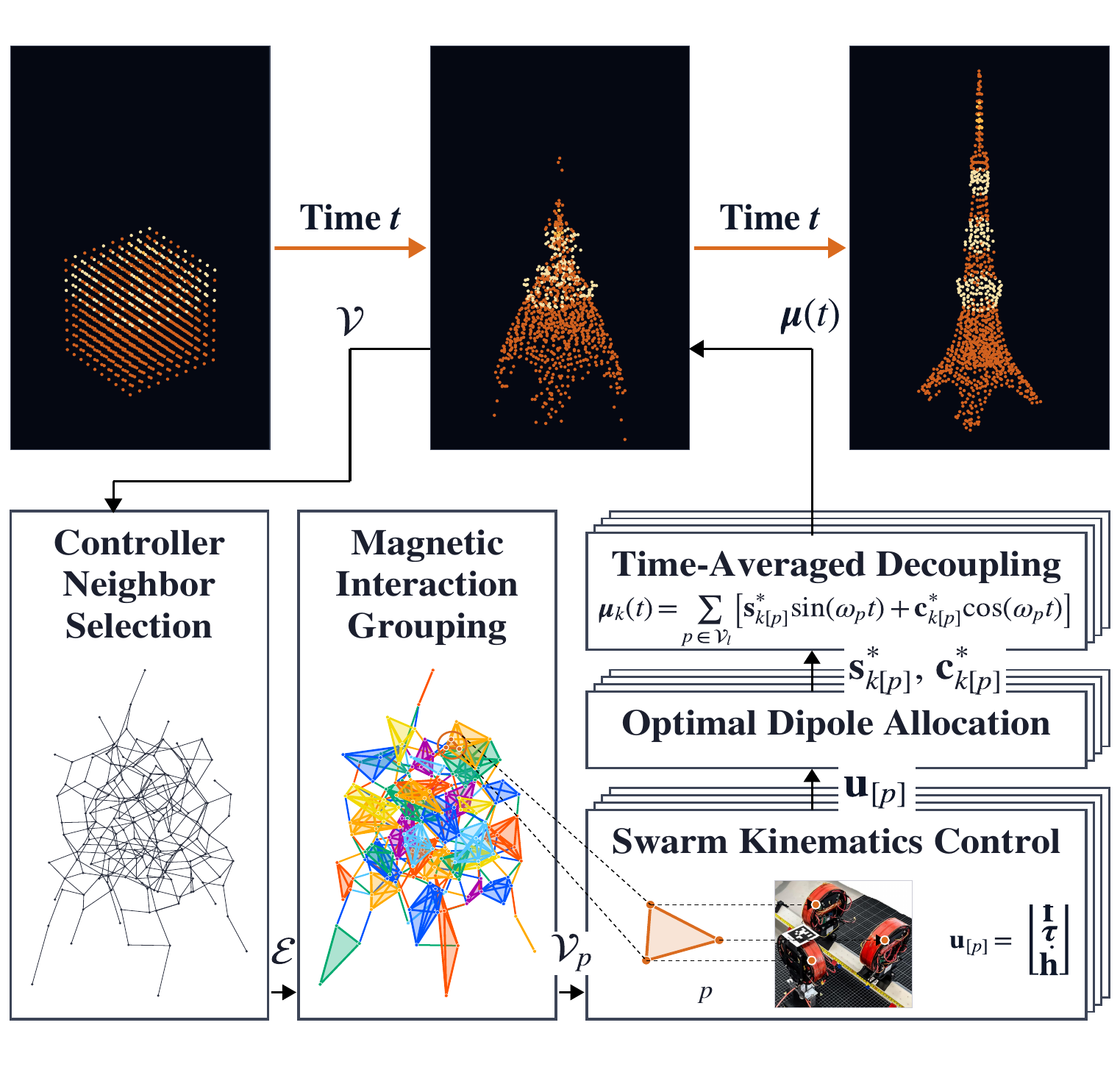}
\caption{Magnetically actuated swarm reconfiguration and system
diagram of grouping and controller design.}
\label{Nonlinear_NODA_MMH_framework_and_grouping}
\end{figure}
\section{Introduction}
\IEEEPARstart{D}{ecentralized} coordination of magnetically actuated
multi-agent systems achieves scalable, propellant-free formation
control 
for the resource constraints of space systems and for high-precision
tracking of the target trajectory. Forming
large space structures from spacecraft swarms is among the most
challenging applications of autonomous multi-agent systems:
distributed apertures overcome the aperture limit set by the launch
vehicle \cite{shim2026optimal,takahashi2026power} and hold their
shape on solar-generated power alone. The actuation has matured as
electromagnetic formation flight (EMFF), which uses the fields of
spacecraft-mounted magnetorquers (MTQs) for relative position and
attitude control \cite{takahashi2022kinematics}. Because the actuator is underactuated, alternating-current (AC) modulation and time-division
control realize six-degree-of-freedom (6-DoF) force and torque control
\cite{youngquist2013alternating,abbasi2022decentralized,takahashi2022kinematics}.

However, the magnetic interaction couples every agent with every
other within range; the unintended interactions degrade the tracking
of the reference trajectory, and the force decays rapidly with
separation and is anisotropic. Since the on-board computing resources
of a space system are limited, an optimization that searches the
grouping of the entire swarm at every control update is undesirable.
Previous studies established a control law that respects the linear-
and nonholonomic angular-momentum conservation
\cite{takahashi2022kinematics} and graph-based coordination for small
formations \cite{takahashi2025distance}.
A simple grouping that secures power optimality and bounds the
interactions between groups within the on-board computation is an open problem.

To this end, this paper proposes a decentralized power-optimal control framework for magnetorquer-driven spacecraft swarms. A coordinator selects the interaction graph and frequency grouping once per reference frame, whereas dipole allocation and each group's control act only on its members' states. Our contributions are threefold. First, the averaged commands of connected overlapping groups preserve the swarm momentum and converge, with the reaction-wheel momenta reaching a uniform distribution, which we verify in a closed-loop simulation with the complete magnetic interaction. Second, a tuning method selects the interaction graph and the
frequency grouping through which the decentralized control acts. The
same partition bounds the unintended interactions between spacecraft
and keeps the loop delay within the requirement, and the
power-minimizing allocation therefore also suppresses the
time-averaging error. Third, an approximate integration that freezes the interaction
geometry over each update interval carries the framework to a
long-horizon orbital reconfiguration with a proven error bound. Together, these results give a decentralized coordination framework whose power optimality and convergence are certified at swarm scale.

The overall structure is as follows. Section~\ref{sec:preliminaries} introduces the magnetic swarm control model, and Section~\ref{sec:eval} formulates the design problem under disturbance using time-averaged error and delay constraints. Section~\ref{sec:three_stage} constructs the decentralized coordination framework including graph selection, edge grouping, and the swarm kinematics controller. Section~\ref{sec:evaluation} demonstrates the proposed framework in spacecraft swarm reconfigurations along with numerically friendly integration methods with a proven error bound. Section~\ref{sec:conclusions} concludes the paper. 
\section{Preliminaries: Magnetic Swarm Control Model}
\label{sec:preliminaries}
\label{NODA_EMFF}%
We first consider one group of $n$ spacecraft, each carrying a
three-axis MTQ, of which $m$ spacecraft carry three-axis reaction
wheels (RWs), $1\leq m\leq n$. Spacecraft $j$ applies the dipole
$\bm\mu_j(t)\triangleq\bm s_j\sin(\omega t)+\bm c_j\cos(\omega t)\in\mathbb R^3$.
First-order averaging over one period gives the electromagnetic force and torque $\bm u_{j\leftarrow k}^{a}\approx\overline{\bm u}_{j\leftarrow k}$
\cite{takahashi2024neural,takahashi2026certified} where
\begin{equation}
\overline{\bm u}_{j\leftarrow k}\triangleq\begin{bmatrix}
    \bm f_{j\leftarrow k}^{a\top}\\
    \bm\tau_{j\leftarrow k}^{a\top}
    \end{bmatrix}
=
\frac{\mu_0}{8\pi}
Q_{j\leftarrow k}
\left(
\bm s_k\otimes\bm s_j+
\bm c_k\otimes\bm c_j
\right),
\label{NODA_unified_averaged_interaction}
\end{equation}
and $Q_{j\leftarrow k}$ is 
the matrix function of $d_{j\leftarrow k}=\|\bm r_{j\leftarrow k}\|$ and the line-of-sight
(LOS) matrix $C^{A/L_{j\leftarrow k}}$ (see \cite{takahashi2024neural}):
\begin{equation}
\begin{aligned}
Q_{{j\leftarrow k}}
&=
\begin{bmatrix}
C^{A/L_{j\leftarrow k}}\Psi_f
\\
C^{A/L_{j\leftarrow k}}\Psi_\tau
\end{bmatrix}
(C^{L_{j\leftarrow k}/A}\otimes C^{L_{j\leftarrow k}/A}),
\end{aligned}
\label{EMgrouping_dipolemodel}
\end{equation}
\begin{equation*}
\begin{aligned}
\Psi_f
&=
\frac{1}{d_{j\leftarrow k}^4}
{\small\begin{bmatrix}
-6&0&0&0&3&0&0&0&3\\
0&3&0&3&0&0&0&0&0\\
0&0&3&0&0&0&3&0&0
\end{bmatrix}},\\
\Psi_\tau
&=
\frac{1}{d_{j\leftarrow k}^3}
{\small\begin{bmatrix}
0&0&0&0&0&1&0&-1&0\\
0&0&2&0&0&0&1&0&0\\
0&-2&0&-1&0&0&0&0&0
\end{bmatrix}}.
\end{aligned}
\end{equation*}
Since magnetic interactions are internal, $\overline{\bm u}_{j\leftarrow k}$ preserves the linear momentum $\bm P$ and the angular momentum $\bm L$. 
\begin{lemma}[Kinematics Control \cite{takahashi2022kinematics}]
\label{NODA_theorem_experimental_controller}
We define the linear- and angular-momentum conservation as  $A(n,m;\bm q)\bm\zeta=[{}^I\bm P^\top,{}^I\bm L^\top]^\top$
where $A(n,m;\bm q)\in\mathbb R^{6\times(6n+3m)}$ and $\bm\zeta=[\dot{\bm r}^{i\top},\bm\omega^{b\top},
\bm h^{b\top}]^{\top}$. 
Then, arbitrary command $\bm u_c=B(n,m)^{-1}M(n,m)S(n,m)\mathfrak u_c$,
$\mathfrak u_c\in\mathbb R^{6n+3m-6}$, preserves angular momenta and is commandable where $S(n,m)$ is the null space of $A(n,m)$ and $M(n,m)$ and
$B(n,m)$ are the mass--inertia and input matrix.
\end{lemma}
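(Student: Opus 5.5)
The plan is to read the closed-loop dynamics in the momentum-conserving form and check two things: that every command produced by $\bm u_c=B^{-1}MS\mathfrak u_c$ leaves $A\bm\zeta$ unchanged, and that the map $\mathfrak u_c\mapsto\bm u_c$ reaches every momentum-preserving generalized acceleration, which is what "commandable" means here. Write the translational, rotational and RW equations of the group as $M(n,m)\dot{\bm\zeta}+(\text{gyroscopic and Coriolis terms})=B(n,m)\bm u$. Here $\bm u$ stacks the pairwise averaged forces and torques $\overline{\bm u}_{j\leftarrow k}$ of \eqref{NODA_unified_averaged_interaction} together with the RW torques. $M$ is block diagonal with entries $m_jI_3$, the inertias $J_j$ and the wheel inertias, so it is positive definite and invertible. $B$ is the square input map from commanded actuator efforts to generalized forces, and we assume it is invertible as the statement implicitly does.

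\textbf{Conservation.} Since magnetic forces and torques are internal and RW torques are internal exchanges, ${}^I\bm P$ and ${}^I\bm L$ are constant for any internal actuation. The constraint $A(n,m;\bm q)\bm\zeta=\text{const}$ is therefore an invariant, so $\frac{d}{dt}(A\bm\zeta)=A\dot{\bm\zeta}+\dot A\bm\zeta=0$. We split $\dot{\bm\zeta}$ into the drift part, which already satisfies $A\dot{\bm\zeta}_{\rm drift}+\dot A\bm\zeta=0$ because the unforced dynamics conserve momentum, and the commanded part $\dot{\bm\zeta}_c=M^{-1}B\bm u_c$. Substituting the command gives $\dot{\bm\zeta}_c=M^{-1}BB^{-1}MS\mathfrak u_c=S\mathfrak u_c$. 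Since $AS=0$, we get $A\dot{\bm\zeta}_c=0$, so both ${}^I\bm P$ and ${}^I\bm L$ are preserved for every $\mathfrak u_c$. This holds in particular for the angular momentum, which is the nonholonomic part because $\bm L$ involves $\bm r\times\dot{\bm r}$ and is not integrable.

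\textbf{Commandability.} We show that $A(n,m;\bm q)$ has full row rank 6 for every configuration. Its translational block is $[m_1I_3,\dots,m_nI_3]$, which has rank 3. Its angular block contains the attitude matrices times the inertias, $C^{I/B_j}J_j$, which are invertible, so it also contains a rank-3 block. The $\dot{\bm r}$ columns of the angular row, $m_j[\bm r_j\times]$, do not reduce the rank because the row structure is block triangular. Hence $\dim\ker A=6n+3m-6$. Choosing $S(n,m)$ with orthonormal columns spanning $\ker A$ makes $\mathfrak u_c\mapsto S\mathfrak u_c$ a bijection onto the momentum-preserving accelerations. Because $B^{-1}M$ is invertible, $\mathfrak u_c\mapsto\bm u_c$ is injective, and every admissible, i.e.\ momentum-preserving, generalized acceleration is realized by exactly one $\mathfrak u_c\in\mathbb R^{6n+3m-6}$. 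This is the commandability claim.

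\textbf{Main obstacle.} The hard part is the rank of $A$ together with the smoothness of $S(\bm q)$. The rank must hold uniformly in $\bm q$, including when $m<n$, where only $m$ spacecraft contribute RW columns. The coupling term $\bm r\times m\dot{\bm r}$ must also not break the block-triangular argument. Uniform full rank is what allows a smooth orthonormal null-space basis, obtained for example from a QR factorization or from the projector $I-A^\top(AA^\top)^{-1}A$. I would first check the rank with the explicit block form $A=\left[\begin{smallmatrix}\mathcal M & 0 & 0\\ \mathcal R & \mathcal J & \mathcal H\end{smallmatrix}\right]$. Using $\mathcal M$ to eliminate $\mathcal R$, one then only needs $[\mathcal J\ \mathcal H]$ to have rank 3, which holds because $J_1$ is invertible.
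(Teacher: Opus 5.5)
The paper gives no proof of this lemma; it is imported from \cite{takahashi2022kinematics}. The only in-paper argument that uses it is the projection in Section~\ref{sec:skc}. There the dynamics carry an explicit constraint force, $M\dot{\bm\zeta}+C\bm\zeta=B\bm u-A^\top\bm\eta$, are restricted to the leaf $\bm\zeta=S\bm v$, and are premultiplied by $S^\top$. This gives $S^\top B\bm u_c=S^\top MS\mathfrak u_c=\overline M\mathfrak u_c$ and hence $\overline M\dot{\bm v}+\overline C\bm v=\overline M\mathfrak u_c$. In that reading, ``commandable'' means all $6n+3m-6$ reduced coordinates are directly actuated, which needs only $\overline M=S^\top MS\succ0$, i.e.\ $S$ of full column rank.

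Your route is different and, under that reading, correct. You keep the unconstrained dynamics and get invariance of $A\bm\zeta$ from the drift identity $A\dot{\bm\zeta}_{\mathrm{drift}}+\dot A\bm\zeta=\bm0$ together with $AS=\bm0$. You then state commandability as a bijection of $\mathfrak u_c$ onto $\ker A$ in full coordinates. The two pictures agree. Differentiating $AS=\bm0$ gives $A\dot S=-\dot AS$, and with your drift identity this yields $A(\dot S\bm v+M^{-1}CS\bm v)=\bm0$ on $\bm\zeta=S\bm v$. That is exactly what keeps your full-coordinate equation on the leaf and reduces it, after premultiplying by $S^\top M$, to the paper's equation with $\bm\eta=\bm0$.

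What your version buys:
\begin{itemize}
\item It shows the commanded efforts carry zero net force and torque. The multiplier is then $\bm\eta=(AM^{-1}A^\top)^{-1}AM^{-1}B\bm u_c=\bm0$, so no fictitious constraint force is needed.
\item It actually proves the rank-6 property, via the block-triangular structure with invertible $C^{I/B_1}\bar I_1$. The count $6n+3m-6$ here and in Lemma~\ref{lemma:decentral_null_space} takes this for granted.
\end{itemize}

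What the paper's version buys is the reduced Euler--Lagrange form with $\overline C=S^\top(M\dot S+CS)$. It also gives the skew symmetry of $\dot{\overline M}-2\overline C$ used in Theorem~\ref{NODA_theorem_overlapping_groups}, which stacks over groups verbatim.

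If ``commandable'' is meant to include realizability by the MTQ dipoles, neither your argument nor the projection covers it. The paper treats that separately through the allocation \eqref{Nonlinear_NODA_MMH_power_allocation}.

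One correction to your closing remark: uniform full rank by itself gives only local smooth null-space frames. Householder-type QR switches signs discontinuously, and $I-A^\top(AA^\top)^{-1}A$ is a projector, not a basis. Your own elimination does give a global one. Solving $A\bm\zeta=\bm0$ for $\dot{\bm r}_1$ and $\bm\omega_1$ in terms of the remaining $6n+3m-6$ coordinates yields a globally smooth full-rank basis, and $S(S^\top S)^{-1/2}$ orthonormalizes it smoothly.
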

\label{sec:allocation}
For multiple groups comprising $N\gg n$ spacecraft, let the leader
set $\mathcal V_l$ index the groups, $\mathfrak g_p$ consisting of the
spacecraft that share the carrier $\omega_p$; the $k$th dipole is \cite{takahashi2024neural}
\begin{equation}
\bm\mu_k(t)
=
\sum_{p\in\mathcal V_l}
\left[
\bm s_{k[p]}\sin(\omega_pt)
+
\bm c_{k[p]}\cos(\omega_pt)
\right].
\label{NODA_multi_frequency_dipole_moment}
\end{equation}
Groups on different frequencies decouple in the first-order averaged dynamics, and this realizes swarm control \cite{takahashi2022kinematics,takahashi2024neural}.
The minimum-power
allocation problem is
\begin{equation}
\begin{aligned}
W^{\star}(\mathfrak g)
&\triangleq
\min_{\bm s_{\mathfrak g},\bm c_{\mathfrak g}}
\quad
\left(
\bm s_{\mathfrak g}^{\top}\bm s_{\mathfrak g}
+
\bm c_{\mathfrak g}^{\top}\bm c_{\mathfrak g}
\right)/2\\
\mathrm{s.t.}\quad
\bm u_{cj}
&=
\frac{\mu_0}{8\pi}
\sum_{k\in\mathcal \mathfrak{g}_p}
Q_{j\leftarrow k}
\left(
\bm s_k\otimes\bm s_j+
\bm c_k\otimes\bm c_j
\right),
\end{aligned}
\label{Nonlinear_NODA_MMH_power_allocation}
\end{equation}
with 
$\bm u_{cj}$ the command input. Its Lagrange dual $J_{\mathrm{dual}}(\mathfrak g)$ is a semidefinite program in the multipliers of the $n$ wrench constraints \cite{takahashi2024neural}. The power
$W_{\mathrm{ODA}}(\mathfrak g)$ of the implemented optimal dipole
allocation (ODA) then satisfies
$J_{\mathrm{dual}}\leq W^{\star}\leq W_{\mathrm{ODA}}\leq\gamma_WJ_{\mathrm{dual}}$, where $\gamma_W\geq1$ is the duality-gap ratio, the certificate ratio with $W_{\mathrm{ODA}}\leq\gamma_WJ_{\mathrm{dual}}$ on every allocation.
The inverse baseline \cite{takahashi2024neural,takahashi2026certified}
fixes one endpoint amplitude, inverts $Q_{j\leftarrow k}$ for the other, and keeps the best of $K$ random initializations (best-of-$K$ inverse).
\section{Problem Formulation}
\label{sec:eval}
For a
given feasible reference trajectory $P_r(t)$, this work accommodates general guidance and trajectory-generation
laws, and we adopt a network-based PD controller as a representative example:
\begin{equation}
\begin{aligned}
F(t)&=-k_pL(P-P_r)-k_vL(\dot P-\dot P_r),\\
T(t)&=-k_\sigma L\Sigma-k_\omega L\Omega,
\end{aligned}
\label{centralized_controller}
\end{equation}
where $P$, $\Sigma$, and $\Omega$ stack the positions, attitude
parameters, and angular velocities of the swarm. We define an undirected graph as $\mathcal G=(\mathcal V,\mathcal E)$, $N=|\mathcal V|$, that has
neighborhoods $\mathcal N_j=\{k\mid(j,k)\in\mathcal E\}$, adjacency $\widetilde A$, and Laplacian $L=\operatorname{Diag}(\widetilde A\bm1)-\widetilde A$ with eigenvalues $0=\lambda_1(L)\leq\cdots\leq\lambda_N(L)$ \cite{mesbahi2010graph}. 
\subsection{Cross-Group Disturbances and Loop Delay}
\label{sec:disturbance_delay}
The swarm reconfiguration command, such as \eqref{centralized_controller}, is realized by the multi-frequency dipoles (\ref{NODA_multi_frequency_dipole_moment}) and wheel torques, and leaves the time-averaging error of \eqref{NODA_unified_averaged_interaction} as a disturbance. For the dipole
\eqref{NODA_multi_frequency_dipole_moment} and channels $p,q$, the
complete interaction is
\begin{equation}
\bm u_{j\leftarrow k}(t)
=
\overline{\bm u}_{j\leftarrow k}
+
\sum_{p\in\mathcal V_l}
\bm d_{j\leftarrow k}^{2\omega_p}(t)+
\sum_{\substack{p,q\in\mathcal V_l\\p\neq q}}
\bm d_{j\leftarrow k}^{pq}(t),
\label{Nonlinear_NODA_MMH_multifrequency_complete_interaction}
\end{equation}
with the intra-group disturbance $\bm d_{j\leftarrow k}^{2\omega_p}$
at twice the channel frequency \cite{takahashi2024neural,takahashi2026certified} and the cross-group disturbance
$\bm d_{j\leftarrow k}^{pq}$ at
$\Delta\omega_{pq}=\omega_p-\omega_q$ and
$\Sigma\omega_{pq}=\omega_p+\omega_q$,
\begin{equation}
\begin{gathered}
\bm d_{j\leftarrow k}^{pq}(t)
=\frac{\mu_0}{8\pi}Q_{j\leftarrow k}
\big[
\bm A_{jk}^{pq}\cos(\Delta\omega_{pq}t)
+\bm B_{jk}^{pq}\sin(\Delta\omega_{pq}t)\\
+\bm C_{jk}^{pq}\cos(\Sigma\omega_{pq}t)
+\bm D_{jk}^{pq}\sin(\Sigma\omega_{pq}t)
\big],\\
{\scriptstyle
\bm A_{jk}^{pq}=\bm c_{k[p]}\otimes\bm c_{j[q]}
+\bm s_{k[p]}\otimes\bm s_{j[q]},\ \
\bm B_{jk}^{pq}=\bm s_{k[p]}\otimes\bm c_{j[q]}
-\bm c_{k[p]}\otimes\bm s_{j[q]},}\\
{\scriptstyle
\bm C_{jk}^{pq}=\bm c_{k[p]}\otimes\bm c_{j[q]}
-\bm s_{k[p]}\otimes\bm s_{j[q]},\ \
\bm D_{jk}^{pq}=\bm s_{k[p]}\otimes\bm c_{j[q]}
+\bm c_{k[p]}\otimes\bm s_{j[q]}.}
\end{gathered}
\label{eq:full_AC_pairwise_interaction}
\end{equation}
Both errors have zero mean over a period \cite{takahashi2022kinematics} and $\sup_t\|\bm d_{j\leftarrow k}^{2\omega_p}(t)\|$ is bounded by a geometry and a power
\cite{takahashi2024neural},
and we prove the counterpart for
$\bm d_{j\leftarrow k}^{pq}(t)$.
\begin{lemma}
\label{NODA_MIG_cross_frequency_bound}
Consider a grouping $Z=\mathfrak g_{[1,N_g]}$ in
which group $\mathfrak g_\ell$ carries $\omega_\ell$ and
has the dual cost $J_\ell=J_{\mathrm{dual}}(\mathfrak g_\ell)$. Then, the cross-group disturbance that $\mathfrak g_q$ exerts on $\mathfrak g_\ell$, $\mathcal D_{\ell\leftarrow q}^{\mathrm{AC}}
=\sum_{j\in\mathfrak g_\ell}\sup_t
\Big\|\sum_{k\in\mathfrak g_q}\bm d_{j\leftarrow k}^{\ell q}(t)\Big\|_2$, summed as
$\mathcal D_{\ell\leftrightarrow q}^{\mathrm{AC}}
=\mathcal D_{\ell\leftarrow q}^{\mathrm{AC}}
+\mathcal D_{q\leftarrow\ell}^{\mathrm{AC}}$, obeys
\begin{equation}
\sum_{\ell<q}
\mathcal D_{\ell\leftrightarrow q}^{\mathrm{AC}}
\leq
\frac{\mu_0}{2\pi}
\gamma_W\Phi_{\mathrm{AC}}(Z)
\triangleq
\frac{\mu_0}{2\pi}
\gamma_W
\sum_{\ell<q}
\Xi_{\ell q}(Z)(J_\ell+J_q),
\label{eq:MIG_AC_cut_penalty}
\end{equation}
where $\Xi_{\ell q}(Z)\triangleq
[\sum_{j\in\mathfrak g_\ell}\sum_{k\in\mathfrak g_q}
\gamma_Q^2(d_{jk})]^{1/2}$ is the root-sum-square of the distance
envelope $\gamma_Q(d)\triangleq\sup\{\|Q(\bm r)\|_2\mid\|\bm r\|_2\geq d\}$
over the cross-group pairs.
\end{lemma}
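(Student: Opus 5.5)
We bound each pairwise cross-group term pointwise in time, collapse the sum over $k$ and the $\sup_t$ with Cauchy--Schwarz, and then convert dipole norms into group powers and, through the duality-gap certificate, into dual costs.

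\textbf{Step 1: pointwise bound on one term.} Fix $j\in\mathfrak g_\ell$, $k\in\mathfrak g_q$, and write $\bm a=[\bm s_{k[\ell]};\bm c_{k[\ell]}]$ and $\bm b=[\bm s_{j[q]};\bm c_{j[q]}]$, using the channel labels of \eqref{eq:full_AC_pairwise_interaction}. The bracket in \eqref{eq:full_AC_pairwise_interaction} is bilinear in these amplitudes. Grouping the difference- and sum-frequency terms and using $\|\bm x\otimes\bm y\|_2=\|\bm x\|\|\bm y\|$ with the triangle inequality, each of $\bm A,\bm B,\bm C,\bm D$ is bounded by $\|\bm a\|\|\bm b\|$ after Cauchy--Schwarz (e.g.\ $\|\bm A\|\le\|\bm c_k\|\|\bm c_j\|+\|\bm s_k\|\|\bm s_j\|\le\|\bm a\|\|\bm b\|$). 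Since each trigonometric factor is at most one, for every $t$
\[
\|\bm d^{\ell q}_{j\leftarrow k}(t)\|\le\frac{\mu_0}{8\pi}\,\|Q_{j\leftarrow k}\|_2\cdot4\|\bm a\|\|\bm b\|\le\frac{\mu_0}{2\pi}\,\gamma_Q(d_{jk})\,\|\bm a\|\|\bm b\|,
\]
where $\|Q_{j\leftarrow k}\|_2\le\gamma_Q(d_{jk})$ by the definition of the envelope. The constant $4$ is exactly what turns $\mu_0/(8\pi)$ into the claimed $\mu_0/(2\pi)$, so this crude count is what we aim for.

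\textbf{Step 2: collapse the sums.} Move $\sup_t$ inside the sum over $k$, summing over $j\in\mathfrak g_\ell$ and $k\in\mathfrak g_q$. Cauchy--Schwarz on the double sum gives
\[
\sum_{j,k}\gamma_Q(d_{jk})\,\|\bm a_k\|\|\bm b_j\|\le\Xi_{\ell q}\Big(\sum_{j,k}\|\bm a_k\|^2\|\bm b_j\|^2\Big)^{1/2}=\Xi_{\ell q}\,\|\bm a\|_{\mathfrak g_q}\|\bm b\|_{\mathfrak g_\ell}.
\]
Then $xy\le(x^2+y^2)/2$ bounds this by $\Xi_{\ell q}(W_q+W_\ell)$. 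Here $W_q=\tfrac12\sum_k\|\bm a_k\|^2$ is the power of group $q$'s own channel under the implemented allocation, and $W_\ell$ is defined the same way for the dipoles $j\in\mathfrak g_\ell$ that the cross term pairs with $\mathfrak g_q$.

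The reverse direction $\mathcal D_{q\leftarrow\ell}$ has the symmetric form, because $\gamma_Q$ depends only on distance. Adding both directions could double the constant. We handle this by pairing the cross terms so that $\bm A,\dots,\bm D$ for the $(j,k)$ and $(k,j)$ directions share the same amplitude products. If that fails, we absorb the factor by bounding each direction by $\|\bm a\|\|\bm b\|$ via the AM--GM form with weight $1/2$.

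\textbf{Step 3: powers to dual costs.} Apply the certificate $W_{\mathrm{ODA}}(\mathfrak g_\ell)\le\gamma_WJ_{\mathrm{dual}}(\mathfrak g_\ell)=\gamma_WJ_\ell$ from Section~\ref{sec:allocation}. Summing over $\ell<q$ then yields \eqref{eq:MIG_AC_cut_penalty}.

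\textbf{Main obstacle.} The delicate point is the bookkeeping of which channel amplitudes appear in $\bm d^{\ell q}$: a spacecraft's dipole on the other group's carrier, $\bm s_{j[q]}$, versus its own group's carrier. We must check that these are controlled by the group powers $W_\ell$ and $W_q$, and we must secure the factor $4$ (rather than $8$) when both directions are summed. We would first verify this with the $\cos(\alpha\pm\beta)$ identities, which show that $\bm A\cos\Delta+\bm B\sin\Delta+\bm C\cos\Sigma+\bm D\sin\Sigma=2(\bm c_k\cos\omega_\ell t+\bm s_k\sin\omega_\ell t)\otimes(\bm c_j\cos\omega_qt+\bm s_j\sin\omega_qt)$. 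This gives the sharper pointwise bound $2\|\bm a\|\|\bm b\|$, which leaves exactly the slack needed for the two directions.
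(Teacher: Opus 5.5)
Once it runs on the identity from your last paragraph, your argument is the paper's proof. The bracket equals $2\bm\mu_{k}(t)\otimes\bm\mu_{j}(t)$ on the two carriers, so each pairwise term is at most $\frac{\mu_0}{4\pi}\gamma_Q(d_{jk})\sqrt{p_{j[\ell]}p_{k[q]}}$. The triangle inequality and Cauchy--Schwarz then give $\mathcal D_{\ell\leftarrow q}^{\mathrm{AC}}\leq\frac{\mu_0}{2\pi}\Xi_{\ell q}\sqrt{W_\ell W_q}$ for each direction. Adding the two directions (using $\Xi_{\ell q}=\Xi_{q\ell}$) and applying $2\sqrt{W_\ell W_q}\leq W_\ell+W_q\leq\gamma_W(J_\ell+J_q)$ closes it.

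Your identity is actually a better justification than the paper's own sentence. The paper only says that each of $\bm A,\dots,\bm D$ has norm at most $\sqrt{p_{j[\ell]}p_{k[q]}}$, and that alone gives your crude factor $4$. The paper's $\mu_0/(4\pi)$ needs either your product form or the pairings $\|\bm A\|^2+\|\bm B\|^2=\|\bm C\|^2+\|\bm D\|^2=p_{j[\ell]}p_{k[q]}$.

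As written, however, your main line is off by a factor of two. You say the factor $4$ is exactly what produces the claimed $\mu_0/(2\pi)$, but this forgets that $\mathcal D_{\ell\leftrightarrow q}^{\mathrm{AC}}$ adds both directions. With the factor $4$, each direction is already bounded by $\frac{\mu_0}{2\pi}\Xi_{\ell q}(W_\ell+W_q)$, so their sum is twice the claim. Neither fallback in Step 2 repairs this:
\begin{itemize}
\item The two directions act on different bodies, and each enters through its own $\sup_t$. Sharing amplitude products between $(j,k)$ and $(k,j)$ therefore cancels nothing.
\item No weighting of AM--GM can remove a constant that sits in the pointwise bound.
\end{itemize}
So Step 1 must be replaced by the factor-$2$ bound, not kept with that bound as an afterthought.

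Finally, settle the bookkeeping you flagged. With $\bm a_k=[\bm s_{k[\ell]};\bm c_{k[\ell]}]$, the quantity $\frac12\sum_{k\in\mathfrak g_q}\|\bm a_k\|^2$ is not $W_q$. The paper's proof uses each spacecraft's amplitudes on its own group's carrier: $p_{j[\ell]}$ for $j\in\mathfrak g_\ell$ and $p_{k[q]}$ for $k\in\mathfrak g_q$. With that choice, $\sum_{j\in\mathfrak g_\ell}p_{j[\ell]}=2W_\ell$ and $\sum_{k\in\mathfrak g_q}p_{k[q]}=2W_q$ hold exactly by the objective of the allocation problem. The displayed definition of $\bm A_{jk}^{pq},\dots,\bm D_{jk}^{pq}$ swaps the channel labels relative to this usage, so you should state your convention explicitly.
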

\begin{proof}
With $p_{j[\ell]}=\|\bm s_{j[\ell]}\|^2+\|\bm c_{j[\ell]}\|^2$, the
amplitudes $\bm A_{jk}^{\ell q},\ldots,\bm D_{jk}^{\ell q}$ of
\eqref{eq:full_AC_pairwise_interaction} have norm at most
$\sqrt{p_{j[\ell]}p_{k[q]}}$, and each pairwise term is bounded by $\frac{\mu_0}{4\pi}\gamma_Q(d_{jk})\sqrt{p_{j[\ell]}p_{k[q]}}$. Summing over the two groups by the triangle inequality
and applying the Cauchy--Schwarz inequality gives
\begin{equation*}
\mathcal D_{\ell\leftarrow q}^{\mathrm{AC}}
\leq\frac{\mu_0}{4\pi}\Xi_{\ell q}
\Bigl(\sum_{j\in\mathfrak g_\ell}p_{j[\ell]}
\sum_{k\in\mathfrak g_q}p_{k[q]}\Bigr)^{1/2}
=\frac{\mu_0}{2\pi}\Xi_{\ell q}\sqrt{W_\ell W_q},
\end{equation*}
where $\sum_{j\in\mathfrak g_\ell}p_{j[\ell]}=2W_\ell$ with
$W_\ell=W_{\mathrm{ODA}}(\mathfrak g_\ell)$ from the objective of
\eqref{Nonlinear_NODA_MMH_power_allocation}. The arithmetic-geometric
mean inequality and $W_\ell\leq\gamma_WJ_\ell$ replace
$2\sqrt{W_\ell W_q}$ by $\gamma_W(J_\ell+J_q)$, and summing over all group pairs yields
\eqref{eq:MIG_AC_cut_penalty}.
\end{proof}
The multi-frequency actuation holds the amplitudes over an update interval $T_u=N_TT$, which acts on the consensus loop as a zero-order hold, that is, a pure delay of $T_u/2$. The classical delay condition of consensus
\cite{olfati2004consensus,mesbahi2010graph}
bounds the velocity feedback rate $k_\lambda=k_v/m_{\mathrm s}$ by
$k_\lambda\lambda_N\tau<\pi/2$ under the loop delay $\tau$ such that
\begin{equation}
k_\lambda\lambda_2<(\pi/2\tau)\lambda_2/\lambda_N.
\label{eq:delay_bound}
\end{equation}
The loop delay $\tau$ is the sum of the communication latency
$\tau_{\mathrm{comm}}$ and the hold delay, which is $\pi/\omega_0$ at
$N_T=1$. Because $N_f$ commensurate carriers must fit below the
highest admissible carrier $\omega_{\max}$, the base frequency obeys
$\omega_0\leq\omega_{\max}/N_f$, and the loop stays below its measured
divergence delay $\tau^{\mathrm{div}}$ only if
$N_f<(\tau^{\mathrm{div}}-\tau_{\mathrm{comm}})\,\omega_{\max}/\pi$.
\subsection{Decentralized Power-Optimal Grouping}
\label{sec:problem_formulation}
\label{sec:coordination}
The design freedom is therefore the commanded interaction graph $\widetilde A$ on the given node set $\mathcal V$ and its partition $Z$ into frequency groups, 
for the power-optimal grouping
\begin{equation}
\begin{aligned}
&(\widetilde A^\star,Z^\star) =\argmin_{\widetilde A,\ Z\in\mathcal Z_{n_{\max}}(\widetilde A)}\
\sum_{\mathfrak g\in Z}W^{\star}(\mathfrak g)\\
&\qquad\mathrm{s.t.}\quad
N_f(Z)<\bigl(\tau^{\mathrm{div}}(\widetilde A)-\tau_{\mathrm{comm}}\bigr)
\,\omega_{\max}/\pi,
\end{aligned}
\label{eq:joint_grouping_problem}
\end{equation}
where $N_f(Z)$ is the number of distinct carriers per neighborhood that the partition requires, $\mathcal Z_{n_{\max}}(\widetilde A)$ collects the partitions of $\widetilde A$ into connected groups of at most $n_{\max}$ spacecraft. 
\section{Decentralized Coordination Framework}
\label{sec:three_stage}
\label{sec:criteria}
\label{sec:grouping_power_trend}
\label{sec:example}
We solve the design problem \eqref{eq:joint_grouping_problem} in three stages: 
\begin{enumerate}
  \item Controller neighbor selection (CNS) selects the delay-admissible nearest-neighbor graph
  \item Magnetic interaction grouping (MIG) partitions
its nodes into frequency groups along CNS's edges 
  \item Swarm kinematics controller (SKC) derives the force-and-torque commands for MIG's groups 
\end{enumerate}
\label{sec:numerical_results_1000sat}
We quantify all three stages on a $1000$-spacecraft reconfiguration that transits from a dense trunk to the sparse Tokyo-Tower formation in $T_{\mathrm{end}}=60$~s (Fig.~\ref{Nonlinear_NODA_MMH_framework_and_grouping}).
\label{sec:constructions}
\subsection{Stage 1: Controller Neighbor Selection for Nodes}
\label{sec:cns_stage}
CNS fixes the commanded graph $\widetilde A$ through three criteria
to reduce the actuation power, the time-averaging error, and the
consensus time.
\subsubsection{Criterion 1: Fourth-Power Distance Law}
\label{sec:distance_dominant}
The first criterion admits short edges. The dipole model
$Q_{j\leftarrow k}$ of \eqref{EMgrouping_dipolemodel} decays as
$d_{jk}^{-4}$, and a longer edge therefore demands larger dipoles and
a larger actuation power $W^\star$ in
\eqref{Nonlinear_NODA_MMH_power_allocation}. On the reference mission, pairing each spacecraft with its $30$th-nearest neighbor rather than its nearest raises the median paired distance from $0.65$ to $1.39$~m and the mission energy $7.3$ times in Fig.~\ref{fig:model_norm_blocks}(a).
Note that this model is accurate to a few percent in the median at the flight scale of
Fig.~\ref{fig:model_norm_blocks}(b), where $R_{\mathrm{coil}}$ is the
coil radius and $d/2R_{\mathrm{coil}}\approx3$ at the median paired
distance for $R_{\mathrm{coil}}=0.1$~m.
\subsubsection{Criterion 2: Delay-and-Hold Divergence Threshold}
\label{sec:spectral_tradeoff}
The second criterion floors the number of neighbors under the loop delay $\tau$ of \eqref{eq:delay_bound}. At a consensus stiffness $k_p\lambda_2$ common to the candidate graphs, the delayed feedback of the fastest mode scales with $\lambda_N/\lambda_2$, which falls as the graph densifies. We measure the divergence delay $\tau^{\mathrm{div}}$ of the linear part of \eqref{eq:group_error_dynamics}, the consensus loop $\ddot{\bm x}=-k_pL_k\bm x(t-\tau)-k_vL_k\dot{\bm x}(t-\tau)$ on the $k$-nearest-neighbor ($k$-NN) graph with Laplacian $L_k$ at that common stiffness, and take its worst value over the mission geometries. A $k$-NN graph is admissible as $\widetilde A$ if this worst $\tau^{\mathrm{div}}$ exceeds the loop delay of the implementation. Optimizing edge selection does not scale because orienting the Delaunay-triangulation graph toward the leaders via a linear program grows as $k^{4.7}$ in the neighbor count, versus $k^{1.9}$ for the neighborhood-greedy construction of the $k$-NN graph.
\begin{figure}[tb!]
\centering
\subfloat[Mission dipole energy for pairings on $k{=}1$--$30$ nearest neighbors, labeled by the median distance between paired spacecraft.]{
\begin{minipage}{0.97\linewidth}\centering
\includegraphics[width=\linewidth]
{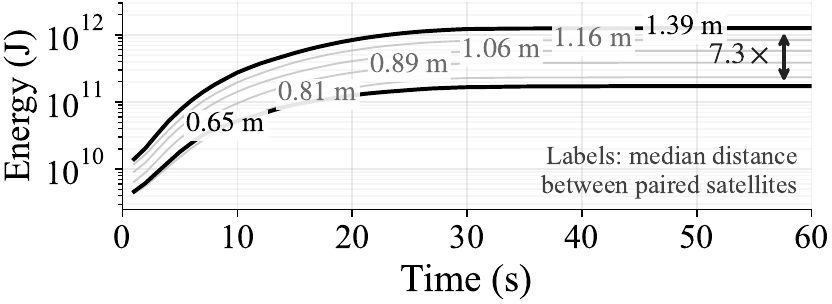}
\end{minipage}
\label{fig:neighbor_k_energy}}\\
\vspace{0.02cm}
\subfloat[Exact-to-dipole force and torque gains.]{
\includegraphics[width=0.97\linewidth]
{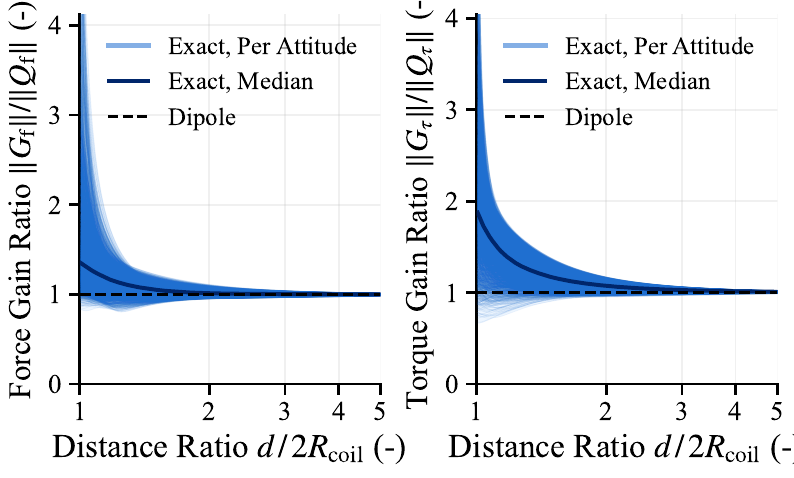}
\label{fig:model_norm_force}\label{fig:model_norm_torque}}\\
\vspace{0.02cm}
\subfloat[Worst-case delay margin.]{
\includegraphics[width=0.97\linewidth]
{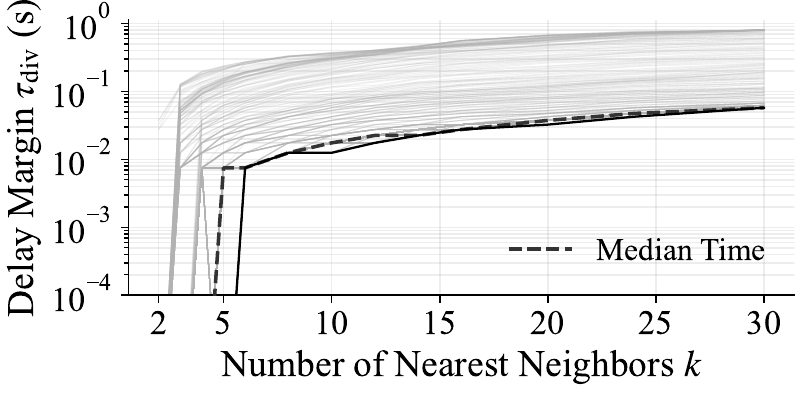}
\label{fig:delay_margin_k}}
\caption{Mission energy over the neighbor distance, finite-coil correction with a minimum ratio of $0.80$ in force and $0.67$ in torque, and delay margin.}
\label{fig:model_norm_blocks}
\end{figure}
\subsubsection{Criterion 3: Information Graph for Consensus}
\label{sec:info_graph}
The third criterion shortens the consensus time with long information
edges. The consensus rate of \eqref{centralized_controller} is set by $\lambda_2$ of the graph that exchanges the
states, and long edges raise $\lambda_2$, whereas realizing them
magnetically is prohibitive because the interaction of
\eqref{EMgrouping_dipolemodel} decays as $d^{-4}$. We therefore form
\eqref{centralized_controller} on the information graph, the commanded graph $\widetilde A$ with long edges added, and these edges
carry no magnetic force. Any set of long edges serves this purpose, and a fixed set is used throughout. The per-spacecraft forces $\bm F$ of \eqref{centralized_controller} sum to
zero over the swarm, and on a connected $\widetilde A$ with incidence matrix $E$ they are realized exactly by the minimum-norm edge flow
$\bm f=E^{+}\bm F$ with $E\bm f=\bm F$, solved once per update interval.
\subsection{Stage 2: Magnetic Interaction Grouping for Edges}
\label{sec:mig_stage}
MIG partitions the swarm into frequency groups of at
most $n_{\max}$ to reduce the power and the number of carriers.
\begin{algorithm}[tb!]
\begin{algorithmic}[1]
\STATE \textbf{Input}
$\bm r_N,\bm u_N,\widetilde A,n_{\max},\bar\rho,R^\star$, and
$\omega_0$ such that each reference frame spans an integer number of common periods $T$; set the force cutoff and
same-color separation to the effective radius $R^\star$. Form the minimum-weight no-singleton pairing using
$d_{\mathrm{eff},jk}$
\FOR{$n=3,\ldots,n_{\max}$}
  \REPEAT
    \STATE Form disjoint neighborhoods of nearby groups with at most $2n$ spacecraft
    \FOR{each neighborhood $\mathcal S$ \textbf{in parallel}}
      \STATE Enumerate connected partitions $\pi$ satisfying
      $2\leq|\mathfrak g|\leq n$ and $\rho(\mathfrak g)\leq\bar\rho$
      \STATE Minimize the far-field dual cost
      $\sum_{\mathfrak g\in\pi}J_{\mathrm{dual}}(\mathfrak g)$
      and commit only a decrease
    \ENDFOR
  \UNTIL{no neighborhood reduces the cost}
\ENDFOR
\STATE \textbf{return} the partition $Z$
\end{algorithmic}
\vspace{0.5mm}
\hrule
\vspace{1.2mm}
\centering
\includegraphics[width=\linewidth]
{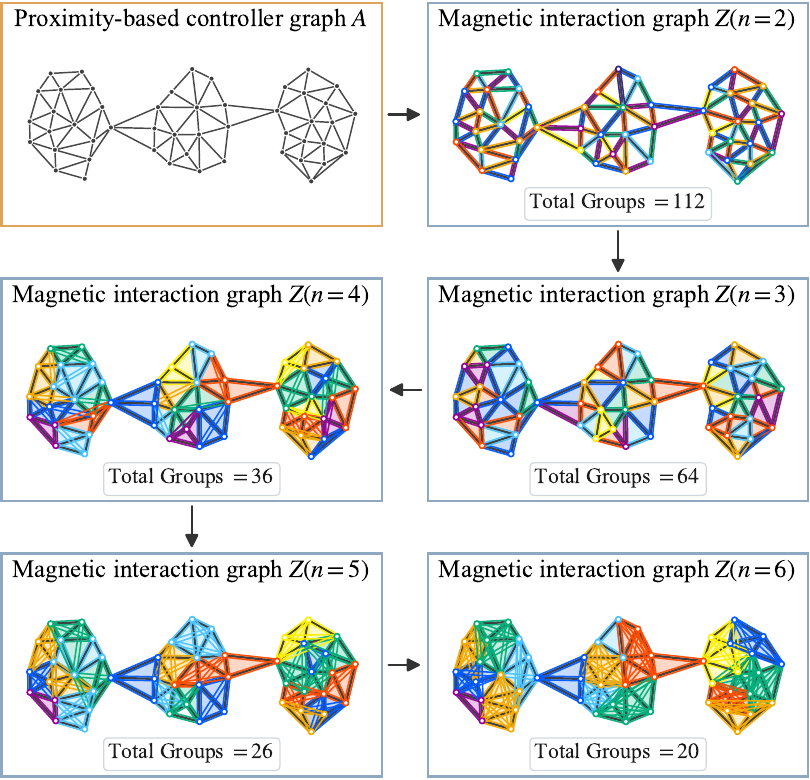}
\caption{Agglomerative grouping.}
\label{alg:agglomerative_grouping}
\label{alg:pipeline}
\end{algorithm}
\subsubsection{Effective-Distance Score and Agglomeration}
\label{sec:mig_edge_distance}
\label{sec:anisotropy}
\label{sec:cancellation}
\label{sec:score}
We score a candidate edge by its effective distance
$d_{\mathrm{eff},jk}$. The power of an edge grows with the fourth
power of the distance and linearly with the demanded wrench in
\eqref{EMgrouping_dipolemodel}. We therefore define the effective
distance as the distance at which the reference wrench
$u_{\mathrm{ref}}$ costs the same power as the demanded wrench at
$d_{jk}$,
\begin{equation}
d_{\mathrm{eff},jk}
=
d_{jk}
\frac{\|\widetilde{\bm u}_{jk}\|^{1/4}}{u_{\mathrm{ref}}^{1/4}}
,\
\widetilde{\bm u}_{jk}=
\begin{bmatrix}
\bm f_{jk}/\gamma_{\mathrm f,jk}\\
\bm\tau_{jk}/(d_{jk}\gamma_{\tau,jk})
\end{bmatrix}
\label{eq:deff_pooling}
\end{equation}
where $\gamma_{\mathrm f,jk}=\|d_{jk}^{4}Q_{j\leftarrow k}[1\!:\!3,:]\|_2$
and $\gamma_{\tau,jk}=\|d_{jk}^{3}Q_{j\leftarrow k}[4\!:\!6,:]\|_2$ are
the gains of the force and torque blocks, and the torque term has the
extra $d_{jk}$ of its $d^{-3}$ law. The alignment changes the power by at most a factor of two. With $\rho$ the cosine of the angle between the
demanded force and the line of sight, the power of a pair is
$P/P(\rho{=}1)=2-\rho^{2}\leq2$. This factor becomes at most
$2^{1/4}\approx1.19$ in $d_{\mathrm{eff}}$ and $2^{1/3}$ on torque.
The exact alignment is evaluated in the dual cost at every merge. We
also use this score for groups, since a group routes its demand along
the lines of sight and does not cancel it. For existing groups, we
define
$d_{\mathrm{eff}}(\mathfrak g_p,\mathfrak g_q)
=\min\{d_{\mathrm{eff},jk}\mid
j\in\mathfrak g_p,\,k\in\mathfrak g_q,\,
(j,k)\in\mathcal E(\widetilde A)\}$ and the group diameter
$\rho(\mathfrak g)=\max_{j,k\in\mathfrak g}d_{jk}$. Algorithm~\ref{alg:agglomerative_grouping} grows the groups with this
score. We first form the minimum-weight perfect matching
$Z_{\mathrm{match}}$ on $d_{\mathrm{eff},jk}$ by the blossom algorithm
\cite{kolmogorov2009blossom} in $O(N^3)$ time. We then merge
neighboring groups only when the dual cost
$J(Z)=\sum_{\mathfrak g\in Z}J_{\mathrm{dual}}(\mathfrak g)$
decreases. The candidate set of a group $\mathfrak g$ is the connected
repartitions of $\mathfrak g$ and its neighboring groups in
$\widetilde A$ into groups of size $2$ to $n$ and diameter at most
$\bar\rho$. A partition is locally stationary when
no candidate decreases $J$.
\begin{proposition}
\label{prop:MIG_descent_termination}
\label{prop:matching_certificate}
Consider Algorithm~\ref{alg:agglomerative_grouping} that commits only
the repartitions that decrease $J$. Then the algorithm terminates after
finitely many repartitions at a locally stationary partition
$Z_{\mathrm{final}}$ with $J(Z_{\mathrm{final}})\leq J(Z_{\mathrm{match}})$.
Moreover, the ODA satisfies
$W_{\mathrm{ODA}}(Z_{\mathrm{final}})\leq\gamma_W\,J(Z_{\mathrm{final}})$
at every time.
\end{proposition}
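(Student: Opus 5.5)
The plan is a standard finite-descent argument on a finite state space, followed by a per-group application of the duality certificate recalled in Section~\ref{sec:allocation}.

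\emph{Step 1 (finiteness).} For a fixed reference frame, the node set $\mathcal V$ and graph $\widetilde A$ are fixed, with $N=|\mathcal V|$. So the set of partitions of $\mathcal V$ into connected groups of size $2$ to $n_{\max}$ and diameter at most $\bar\rho$ is finite; its cardinality is bounded by the Bell number of $N$. Every iterate of Algorithm~\ref{alg:agglomerative_grouping} lies in this set. The initial matching $Z_{\mathrm{match}}$ is no-singleton, and each committed repartition only replaces groups by a connected repartition obeying the same size and diameter constraints. Hence the iterates stay in this set.

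\emph{Step 2 (strict descent and termination).} A repartition is committed only if it strictly decreases $J(Z)=\sum_{\mathfrak g\in Z}J_{\mathrm{dual}}(\mathfrak g)$. In the parallel step, the neighborhoods $\mathcal S$ are disjoint. The groups outside each $\mathcal S$ are untouched, and $J$ is additive over groups, so the simultaneous commits also strictly decrease $J$: the total change is a sum of the per-neighborhood changes, each negative or zero, with at least one negative. Consequently no partition is visited twice. On a finite set this permits only finitely many commits, bounded by the number of admissible partitions.

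For each $n$, the \textbf{repeat} loop ends when no neighborhood reduces the cost. The outer loop over $n$ is finite. At exit with $n=n_{\max}$, no candidate in the candidate set decreases $J$. I expect the main subtlety here: the stopping test is evaluated over the neighborhoods formed in the last sweep, whereas local stationarity is defined over all groups and their neighbors in $\widetilde A$. I would define the neighborhood formation so that, across the final sweep (if needed, by cycling the disjoint covers), every group together with its neighboring groups is examined. Then ``no neighborhood reduces the cost'' coincides with the definition of a locally stationary partition. Monotonicity from the initial point gives $J(Z_{\mathrm{final}})\le J(Z_{\mathrm{match}})$.

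\emph{Step 3 (power certificate).} For each $\mathfrak g\in Z_{\mathrm{final}}$, the chain $J_{\mathrm{dual}}\le W^\star\le W_{\mathrm{ODA}}\le\gamma_W J_{\mathrm{dual}}$ from Section~\ref{sec:allocation} applies to that group's allocation problem \eqref{Nonlinear_NODA_MMH_power_allocation}. This holds at every time, since $\gamma_W$ certifies every allocation. The groups use distinct carriers and decouple in the averaged dynamics, so the total ODA power is additive: $W_{\mathrm{ODA}}(Z)=\sum_{\mathfrak g}W_{\mathrm{ODA}}(\mathfrak g)$. Summing the per-group bounds gives $W_{\mathrm{ODA}}(Z_{\mathrm{final}})\le\gamma_W J(Z_{\mathrm{final}})$.

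One point must be kept consistent: this $J_{\mathrm{dual}}$ is evaluated at the current geometry. If the algorithm used the far-field dual cost for the descent, I would note that $Z_{\mathrm{final}}$ is held fixed over the frame, and that the certificate is applied with the dual cost re-evaluated at each time.
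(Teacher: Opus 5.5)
Your proposal is correct and follows the paper's proof. Termination comes from strict descent of $J$ with only finitely many committed repartitions, $J(Z_{\mathrm{final}})\le J(Z_{\mathrm{match}})$ follows by monotonicity, and the power bound comes from summing the per-group certificate $W_{\mathrm{ODA}}(\mathfrak g)\le\gamma_W J_{\mathrm{dual}}(\mathfrak g)$ over the groups of $Z_{\mathrm{final}}$. Your no-revisit argument on the finite set of partitions of $\mathcal V$ is the precise form of the paper's terse ``finite candidate sets plus strict decrease of $J\ge0$'' step; only the Bell-number bound on all partitions is needed there, since $Z_{\mathrm{match}}$ need not meet the diameter cap $\bar\rho$. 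Your treatment of the disjoint parallel commits, and of the stopping test versus local stationarity, is extra care that the paper leaves implicit.
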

\begin{proof}
Every candidate set is finite, and every committed repartition strictly
decreases $J\geq0$. Only finitely many repartitions are committed, which
gives the first claim. Summing the
certificate $W_{\mathrm{ODA}}(\mathfrak g)\leq\gamma_WJ_{\mathrm{dual}}(\mathfrak g)$
of Section~\ref{sec:allocation} over the groups of $Z_{\mathrm{final}}$
yields the second claim.
\end{proof}
\subsubsection{Offline Evaluation of the Grouping Stage}
\label{sec:nn_greedy_distance}
\label{sec:agglomerative}
\label{sec:cumulative_punchline}
\label{sec:eval_peak_ac}
\begin{figure}[tb!]
\centering
\subfloat[Force routing and the alignment penalty.]{
\includegraphics[width=0.92\linewidth]
{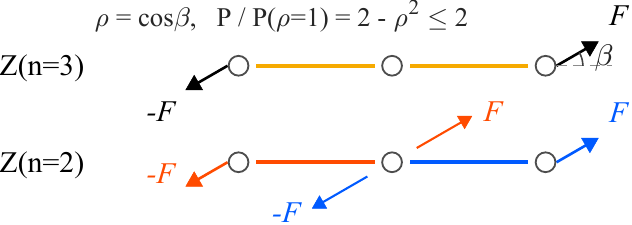}
\label{fig:n2_n3_concept}\label{fig:anisotropy_pair}}\\
\vspace{0.02cm}
\subfloat[Conservatism and demand alignment.]{
\includegraphics[width=0.96\linewidth]
{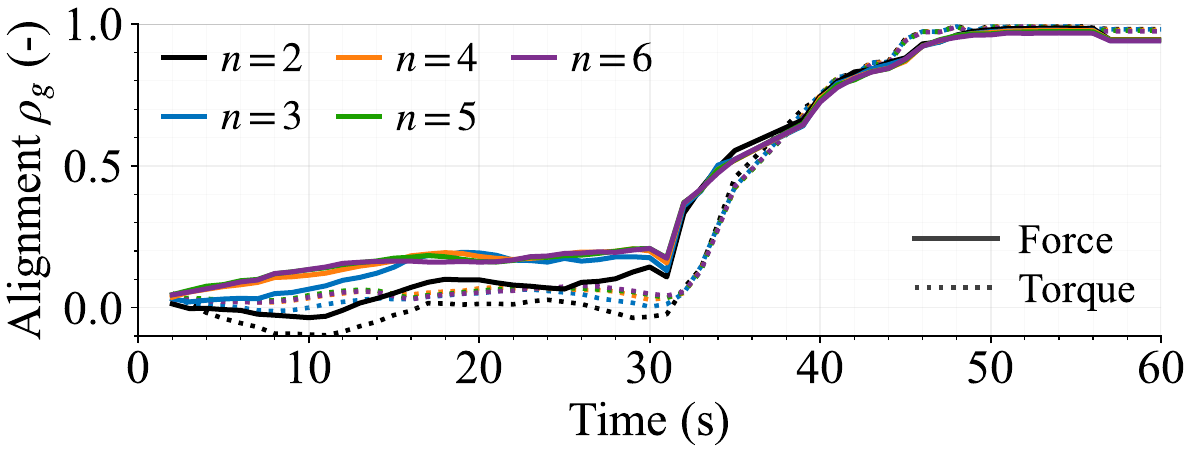}
\label{fig:factor3_cancel_c}\label{fig:factor3_cancel_rho}}
\caption{Grouping mechanisms: (a)~routing and the alignment penalty;
(b)~$\rho_g$ per group size.}
\label{fig:factor3_cancel_combined}
\end{figure}
\begin{figure}[tb!]
\centering
\subfloat[Weight-sweep landscape, where the ratio stays within $1.16$--$1.25\times$ against the bound of $2$.]{
\includegraphics[width=\linewidth]
{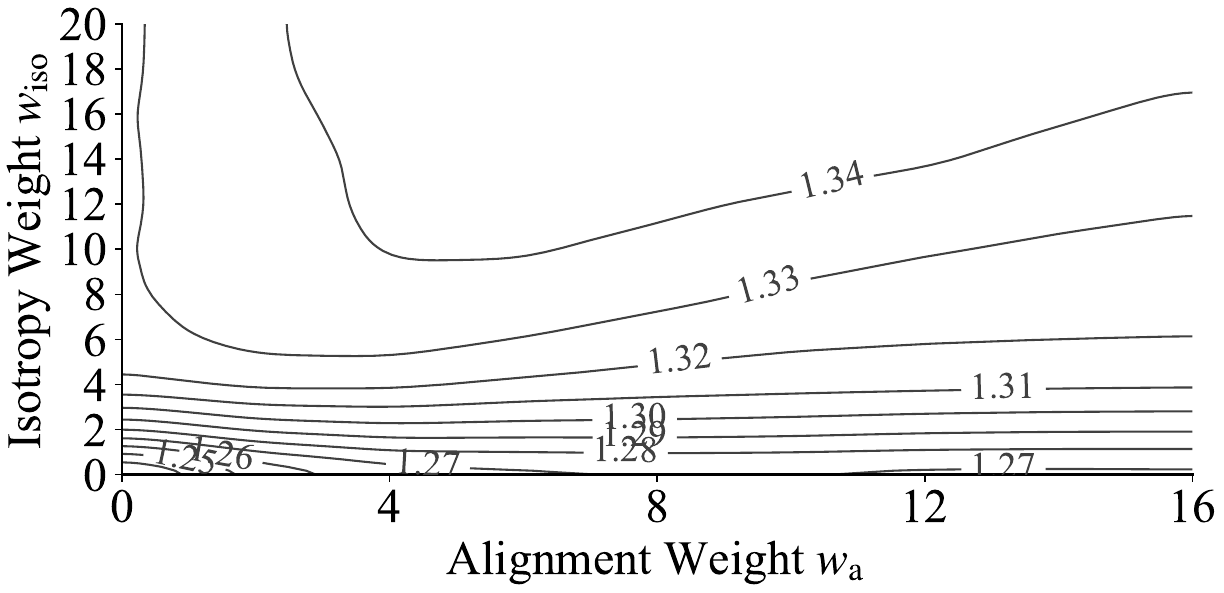}
\label{fig:lambda_landscape_a}}\\
\vspace{0.02cm}
\subfloat[Mission power-ratio distribution.]{
\includegraphics[width=\linewidth]
{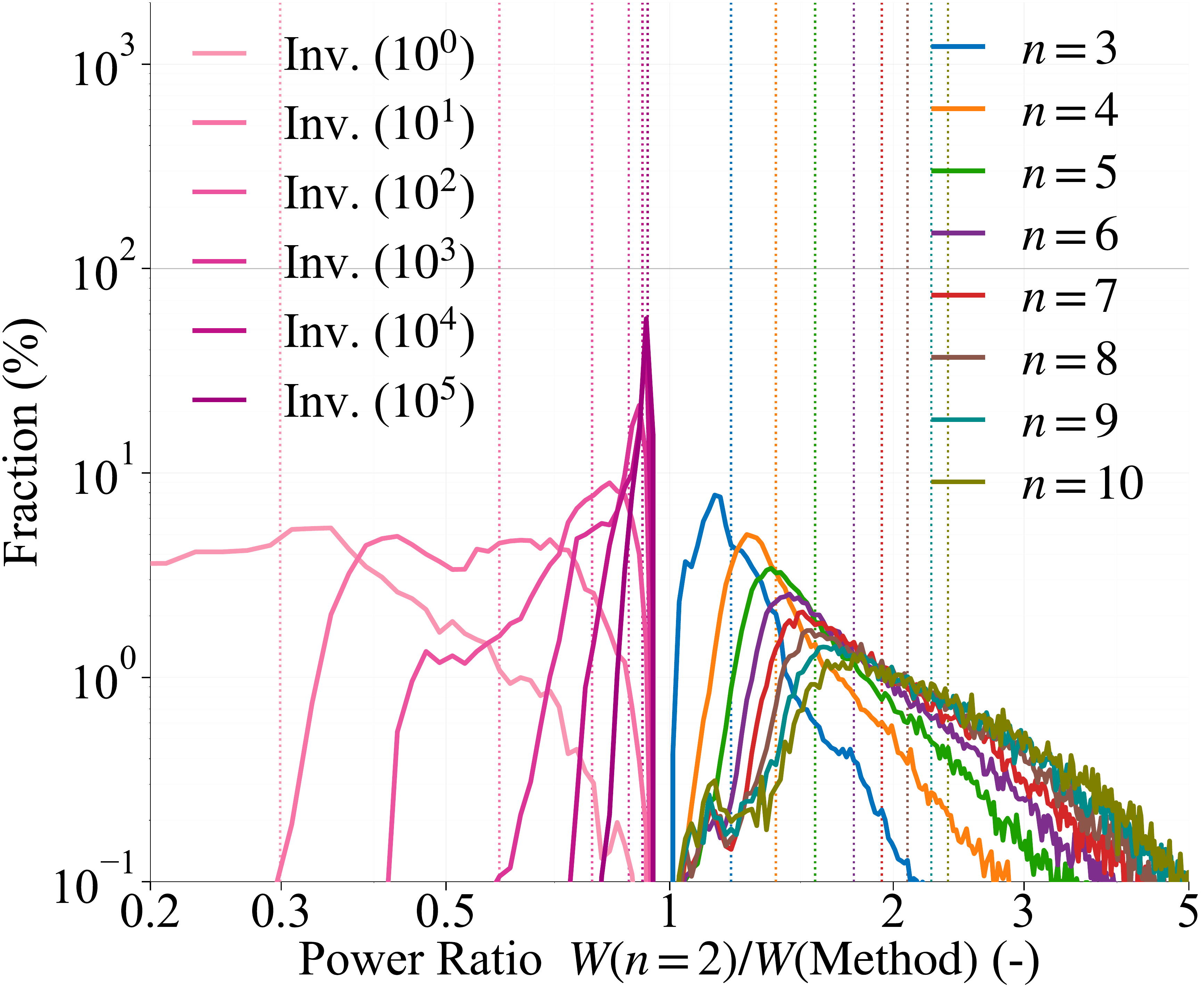}
\label{fig:capsweep_ratio}}
\caption{(a)~Power ratio over the alignment weight $w_{\mathrm a}$. (b)~Mission ratio $W(n{=}2)/W(\mathrm{Method})$.}
\label{fig:lambda_landscape}
\end{figure}
We evaluate the grouping offline on the reference trajectory,
independently of the closed loop. We check three points, namely the
effect of the demand direction on the chosen edges, the routing of
the demands within a group, and the power saved by groups larger than
a pair.

For the first point, we increase the alignment weight $w_{\mathrm a}$
until the result no longer changes, and the grouped power ratio moves
only within $1.16$--$1.25\times$ in Fig.~\ref{fig:lambda_landscape}.
Note that $57\%$ of the competing candidate pairs are within the
$2^{1/4}$ band of each other, and the demand direction therefore
rarely changes which pair is chosen.

For the second point, we measure the demand alignment $\rho_g$, which
rises from $0.07$ in the transient to $0.98$ once the demands align in
Fig.~\ref{fig:factor3_cancel_combined}(b). The direction can then
change the power by at most $2-\rho_g^{2}=1.05$, and the demand is
routed along the lines of sight.

For the third point, we compare groups with pairs on the mission
samples in Table~\ref{fig:density_bench}, where the best-of-$K$
inverse stays below the ODA pairing in the median even at
$K{=}10^{5}$, whereas grouping at $n{=}4$ exceeds the pairing by
$1.39$ in the median and by $15.9$ at the maximum. The gain is
therefore large only where the neighbors are close and aligned, and
we limit the size of a group by its diameter $\bar\rho$ in
Algorithm~\ref{alg:agglomerative_grouping}. Note that larger groups
reduce the number of groups but raise the cross-group bound
$\Phi_{\mathrm{AC}}$ of \eqref{eq:MIG_AC_cut_penalty}, whose
$d_{jk}^{-8}$ kernel is dominated by the closest cross-group pair,
and we therefore limit the group size by $n_{\max}$.
\begin{remark}
\label{rem:cliques}
We also compare isolated cliques of three to five spacecraft and an
octahedron under random commands, where the ratio grows with the
clique size, whereas the octahedron gains less than the 5-clique
despite more edges. The gain therefore depends on the directions of
the edges of a spacecraft, not on their number.
\end{remark}
\begin{table}[tb!]
\centering
\caption{Power ratios $J(Z(2))/J(Z(n))$ on isolated cliques and
$W(Z(2))/W(\mathrm{method})$.}
\label{fig:density_bench}
\label{fig:capsweep_peakshift}
\begin{tabular}{|l|c|c||l|c|c|}\hline
Clique & Med. & Max & Mission & Med. & Max \\ \hline
3-clique & $1.75$ & $2.9$ & Inv.\ $K{=}10^{2}$ & $0.79$ & $0.94$ \\ \hline
4-clique & $2.42$ & $5.1$ & Inv.\ $K{=}10^{5}$ & $0.93$ & $0.94$ \\ \hline
5-clique & $4.02$ & $9.0$ & ODA $n{=}2$ & $1.00$ & $1.00$ \\ \hline
Octahedron & $3.19$ & $6.3$ & Grp.\ $n{=}4$ & $1.39$ & $15.9$ \\ \hline
\end{tabular}
\end{table}
\subsection{Stage 3: Swarm Kinematics Controller for Groups}
\label{sec:skc}
\label{sec:momentum_generalization}
The third stage constructs a decentralized controller that converges to target states while conserving angular momentum. We first generalize the momentum constraint \cite{takahashi2022kinematics} to the full state of $N$ spacecraft and $N_{\mathrm w}$ wheels:
\begin{equation*}
\mathcal A(N,N_{\mathrm w};\bm q)\bm\zeta
=
\left[\begin{smallmatrix}
m_1E_3 & m_1[{}^i\bm r_1]_\times^\top\\
\vdots & \vdots\\
m_NE_3 & m_N[{}^i\bm r_N]_\times^\top\\
\bm0 & (C^{I/B_1}\bar I_1)^\top\\
\vdots & \vdots\\
\bm0 & (C^{I/B_N}\bar I_N)^\top\\
\bm0 & C^{I/B_{j_1}\top}\\
\vdots & \vdots\\
\bm0 & C^{I/B_{j_{N_{\mathrm w}}}\top}
\end{smallmatrix}\right]^{\!\top}\!\bm\zeta
=\begin{bmatrix}\bm P_n^{i}\\\bm L_n^{i}\end{bmatrix},\\
\end{equation*}
where $[\cdot]_\times$ is the skew operator and $E_3$ the identity. We define the local constraint with $n$ members and $m$ wheels:
\begin{equation}
A(n,m;\bm q)\bm\zeta
=
\left[\begin{smallmatrix}
m_1E_3 & m_1[{}^i\bm r_1]_\times^\top\\
\vdots & \vdots\\
m_nE_3 & m_n[{}^i\bm r_n]_\times^\top\\
\bm0 & (C^{I/B_1}\bar I_1)^\top\\
\vdots & \vdots\\
\bm0 & (C^{I/B_n}\bar I_n)^\top\\
\bm0 & C^{I/B_{j_1}\top}\\
\vdots & \vdots\\
\bm0 & C^{I/B_{j_m}\top}
\end{smallmatrix}\right]^{\!\top}\!\bm\zeta=
\begin{bmatrix}\bm P_N^{i}\\\bm L_N^{i}\end{bmatrix}.
\label{eq:momentum_constraint_rows}
\end{equation}
Each local group computes $S(n_\ell,m_\ell)$ from $A(n,m;\bm q)$ and applies the admissible input of
Lemma~\ref{NODA_theorem_experimental_controller} \cite{takahashi2022kinematics}:
\begin{equation}
\begin{aligned}
\bm u_\ell&=B(n_\ell,m_\ell)^{-1}M(n_\ell,m_\ell)S(n_\ell,m_\ell)\mathfrak u_\ell,\\
\mathfrak u_\ell
&=\dot{\bm v}_{\mathrm r,\ell}
+\overline M_\ell^{-1}
\bigl(
\overline C_\ell\bm v_{\mathrm r,\ell}
-K_{1,\ell}\,\delta\bm q_{s,\ell}
-K_{2,\ell}\,\delta\bm v_\ell
\bigr),
\end{aligned}
\label{eq:wheelequipped_reduced_command}
\end{equation}
where $\delta\bm q_{s,\ell},\delta\bm v_\ell$ are the tracking errors,
$K_{1,\ell},K_{2,\ell}\succ0$, and $\overline M_\ell=S^\top MS$ and
$\overline C_\ell=S^\top(M\dot S+CS)$ are evaluated at
$(n_\ell,m_\ell)$ with $C$ the Coriolis matrix of
\cite{takahashi2022kinematics}. 
\begin{lemma}
\label{lemma:decentral_null_space}
Consider the member embeddings
$K_\ell\in\mathbb R^{(6N+3N_{\mathrm w})\times(6n_\ell+3m_\ell)}$, the
selection matrices with orthonormal columns that place the
coordinates of the members of $\mathfrak g_\ell$ at their positions in
the swarm coordinates, stacked into $\mathcal S=[K_1S(n_1,m_1)\ \cdots\ K_{N_g}S(n_{N_g},m_{N_g})]$. Then, $\mathcal A(N,N_{\mathrm w};\bm q)\mathcal S=\bm0$, and for a
disjoint partition $\mathcal S$ has
$6N+3N_{\mathrm w}-6N_g$ columns and leaves $6(N_g-1)$ directions of
$\ker\mathcal A(N,N_{\mathrm w};\bm q)$ unspanned.
\end{lemma}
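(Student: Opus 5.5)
The plan is to write the swarm constraint matrix as a sum of group contributions and then count dimensions. The key observation is that the columns of $\mathcal A(N,N_{\mathrm w};\bm q)^\top$ are block rows indexed by spacecraft and wheels. Each block row contributes linearly to $\bm P$ and $\bm L$, with an entry that depends only on that spacecraft's own state: $m_jE_3$, $m_j[{}^i\bm r_j]_\times^\top$, $C^{I/B_j}\bar I_j$, or the wheel rotation. Hence for any member embedding $K_\ell$,
\begin{equation*}
\mathcal A(N,N_{\mathrm w};\bm q)K_\ell=A(n_\ell,m_\ell;\bm q_\ell),
\end{equation*}
since $K_\ell$ has orthonormal selection columns and simply extracts the block columns of the members of $\mathfrak g_\ell$. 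The local matrix of \eqref{eq:momentum_constraint_rows} is built from exactly the same blocks in the same inertial frame, so no frame change is needed. Then $\mathcal A K_\ell S(n_\ell,m_\ell)=A(n_\ell,m_\ell)S(n_\ell,m_\ell)=\bm0$ by the definition of $S$ as a null-space basis in Lemma~\ref{NODA_theorem_experimental_controller}. Stacking these over $\ell$ gives $\mathcal A\mathcal S=\bm0$.

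For the column count, I would use Lemma~\ref{NODA_theorem_experimental_controller}, which gives $\mathfrak u_c\in\mathbb R^{6n+3m-6}$. This means $A(n_\ell,m_\ell)$ has full row rank $6$, which needs to be checked. Linear momentum has rank $3$ from the mass blocks, since $m_j>0$. Angular momentum has rank $3$ from the invertible inertia block $C^{I/B_j}\bar I_j$ of any single member; this holds even for $n_\ell=1$, though the partition here has $|\mathfrak g|\geq2$. So $S(n_\ell,m_\ell)$ has $6n_\ell+3m_\ell-6$ columns. For a disjoint partition we have $\sum n_\ell=N$ and $\sum m_\ell=N_{\mathrm w}$, which gives $6N+3N_{\mathrm w}-6N_g$ columns in total.

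To show these columns are linearly independent, and hence that $\operatorname{rank}\mathcal S$ equals this count, I would use disjointness. The images of the $K_\ell$ are mutually orthogonal coordinate subspaces, so $\mathcal S$ is block diagonal up to a permutation of rows, and each $S(n_\ell,m_\ell)$ has full column rank as a basis. Finally, $\mathcal A$ has full row rank $6$ by the same argument applied to the whole swarm, so $\dim\ker\mathcal A=6N+3N_{\mathrm w}-6$. Since $\operatorname{im}\mathcal S\subseteq\ker\mathcal A$, the codimension of $\operatorname{im}\mathcal S$ inside $\ker\mathcal A$ is $(6N+3N_{\mathrm w}-6)-(6N+3N_{\mathrm w}-6N_g)=6(N_g-1)$. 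These unspanned directions are the inter-group exchanges of linear and angular momentum.

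The main obstacle is the rank bookkeeping: confirming full row rank $6$ for both $A$ and $\mathcal A$, and confirming the identity $\mathcal AK_\ell=A(n_\ell,m_\ell)$. Both depend on the right-hand sides being momenta about a common inertial origin, so that the translational cross terms $m_j[\bm r_j]_\times^\top$ agree between the local and global constraints. I would check this first, noting that the angular-momentum rows remain rank $3$ regardless of the positions because the inertia block alone is invertible.
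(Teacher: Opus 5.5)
Your proposal is correct and follows the same route as the paper. Both arguments use the identity $\mathcal A(N,N_{\mathrm w};\bm q)K_\ell=A(n_\ell,m_\ell;\bm q)$ to get $\mathcal A\mathcal S=\bm0$, use disjointness ($K_\ell^\top K_q=\bm0$ for $\ell\neq q$) for full column rank, and count against $\dim\ker\mathcal A=6N+3N_{\mathrm w}-6$ to obtain the $6(N_g-1)$ deficit; your explicit row-rank-$6$ check via the invertible mass and inertia blocks, and your remark that the local and global constraints must share the inertial origin, make explicit steps the paper takes for granted.
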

\begin{proof}
The embedding gives
$\mathcal A(N,N_{\mathrm w};\bm q)K_\ell=A(n_\ell,m_\ell;\bm q)$, and
applying it block by block,
\begin{equation*}
\begin{aligned}
\mathcal A(N,N_{\mathrm w};\bm q)\mathcal S
&=\bigl[\,\mathcal A(N,N_{\mathrm w};\bm q)K_1S(n_1,m_1)\ \cdots\,\bigr]\\
&=\bigl[\,A(n_1,m_1;\bm q)S(n_1,m_1)\ \cdots\,\bigr]=\bm0,
\end{aligned}
\end{equation*}
since $A(n_\ell,m_\ell;\bm q)S(n_\ell,m_\ell)=\bm0$.
For a disjoint partition $K_\ell^\top K_q=\bm0$ for $\ell\neq q$, and
$\mathcal S$ has full column rank, $6(N_g-1)$ below
$\dim\ker\mathcal A(N,N_{\mathrm w};\bm q)=6N+3N_{\mathrm w}-6$.
\end{proof}
Lemma~\ref{lemma:decentral_null_space} derives the averaged dynamics of the swarm with the constraint force of the momentum conservation:
\begin{equation*}
M\dot{\bm\zeta}+C\bm\zeta=B\sum\nolimits_\ell K_\ell\bm u_\ell
-\mathcal A(N,N_{\mathrm w};\bm q)^\top\bm\eta,
\end{equation*}
where $M$, $C$, and $B$ are block diagonal over the spacecraft and
wheels. On the zero-momentum leaf of every group, $\bm\zeta=\mathcal S\bm v$ with $\bm v$ stacking the reduced velocities $\bm v_\ell$. Multiplying the dynamics from the
left by $\mathcal S^\top$ removes the constraint force, since
$\mathcal S^\top\mathcal A^\top=\bm0$, and the block-diagonal
structure with $K_\ell^\top K_q=0$ for $\ell\neq q$ decouples the
result over the groups,
$\overline M_\ell\dot{\bm v}_\ell+\overline C_\ell\bm v_\ell
=\overline M_\ell\mathfrak u_\ell$. Substituting
\eqref{eq:wheelequipped_reduced_command} gives the error dynamics of
each group,
\begin{equation}
\overline M_\ell\delta\dot{\bm v}_\ell+\overline C_\ell\delta\bm v_\ell
=-K_{1,\ell}\delta\bm q_{s,\ell}-K_{2,\ell}\delta\bm v_\ell,
\label{eq:group_error_dynamics}
\end{equation}
where $\delta\bm q_{s,\ell}$ stacks the position errors
$\delta\bm r_j$ and the MRP errors $\delta\bm\sigma_j$ of the members
and $K_{1,\ell}=\operatorname{diag}(k_pE,k_\sigma E,\bm0)$.
\begin{theorem}
\label{NODA_theorem_overlapping_groups}
Assume that the commanded graph is connected, that the groups of each carrier form a disjoint partition, and that each group follows
\eqref{eq:group_error_dynamics}. Then
all positions and attitudes converge to the reference, and the wheel
momenta to the uniform share
\begin{equation}
\lim_{t\to\infty}\bm h_j^i
=
(\bm L^i-\bm L_d^i)/N_{\mathrm w},
\qquad
j=1,\ldots,N_{\mathrm w}.
\label{NODA_uniform_RW_distribution}
\end{equation}

Moreover, if the allocated
amplitudes are Lipschitz continuous in the state, the state errors
remain bounded, shrinking with the carrier frequency $\omega_0$ and
growing with the cross-group disturbance in \eqref{eq:MIG_AC_cut_penalty}.
\end{theorem}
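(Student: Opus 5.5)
The plan has three layers. First, show that each group's error converges on its own. Second, use the overlap of groups across carriers to turn these separate local convergences into swarm-wide convergence. Third, treat the averaging and cross-group errors as a perturbation.

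\emph{Step 1 (per-group convergence).} For each group, use the Lyapunov function
$V_\ell=\tfrac12\delta\bm v_\ell^\top\overline M_\ell\delta\bm v_\ell+\tfrac12\delta\bm q_{s,\ell}^\top K_{1,\ell}\delta\bm q_{s,\ell}$
along \eqref{eq:group_error_dynamics}. The standard skew-symmetry of $\dot{\overline M}_\ell-2\overline C_\ell$, inherited from the construction of $\overline C_\ell=S^\top(M\dot S+CS)$ in \cite{takahashi2022kinematics}, gives $\dot V_\ell\leq-\delta\bm v_\ell^\top K_{2,\ell}\delta\bm v_\ell$. LaSalle's invariance principle then gives $\delta\bm v_\ell\to\bm0$. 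On the invariant set, \eqref{eq:group_error_dynamics} forces $K_{1,\ell}\delta\bm q_{s,\ell}=\bm0$. Because $K_{1,\ell}=\operatorname{diag}(k_pE,k_\sigma E,\bm0)$, this yields $\delta\bm r_j\to\bm0$ and $\delta\bm\sigma_j\to\bm0$ for the members of $\mathfrak g_\ell$, but only on that group's zero-momentum leaf. Group $\ell$ therefore fixes its members only up to the $6$ rigid directions of its own conserved $(\bm P,\bm L)$. By Lemma~\ref{lemma:decentral_null_space}, each disjoint partition leaves $6(N_g-1)$ directions of $\ker\mathcal A$ unspanned.

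\emph{Step 2 (overlap closes the gap).} Here the multiple carriers enter. Since the groups of each carrier form a disjoint partition but memberships overlap across carriers, take the union of the stacked bases $\mathcal S^{(p)}$ over carriers $p$. The time-averaged dynamics is driven by $\sum_p \mathcal S^{(p)}\bm v^{(p)}$. I would argue that the span of all $\mathcal S^{(p)}$ equals $\ker\mathcal A(N,N_{\mathrm w};\bm q)$ whenever the overlap graph (groups as nodes, shared members as edges) is connected. That graph is connected because the commanded graph is connected and MIG partitions along its edges. The argument is by induction on the group-overlap tree: two groups sharing a spacecraft can exchange momentum through it, so the $6$ unspanned rigid modes of the pair merge.

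With the span in hand, use the summed Lyapunov function $V=\sum_p\sum_\ell V_\ell$. Its invariant set requires every group's error to vanish simultaneously. This pins all absolute positions and attitudes to the reference, with the swarm momentum fixed at its initial value. The remaining conserved quantity $\bm L^i-\bm L^i_d$ must then reside in the wheels. The wheel coordinates carry zero stiffness in $K_{1,\ell}$, but the null-space parametrization penalizes unequal wheel shares through the velocity coordinates. In the limit, each group equalizes its own wheels, and overlapping groups propagate this equality along the connected overlap graph, which gives \eqref{NODA_uniform_RW_distribution}. I expect this step to be the main obstacle. Proving that the overlapping decentralized bases jointly span $\ker\mathcal A$ and that wheel equalization propagates needs a careful rank argument. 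My first attempt would be to show it for two overlapping groups explicitly and then induct.

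\emph{Step 3 (robustness).} The true input differs from the averaged one by the $2\omega_p$ term and the cross-group terms $\bm d^{pq}_{j\leftarrow k}$ of \eqref{Nonlinear_NODA_MMH_multifrequency_complete_interaction}, all of which have zero mean. If the amplitudes are Lipschitz in the state, then first-order averaging theory (near-identity change of variables) bounds the mismatch between the true and averaged trajectories. The mismatch is $O(1/\omega_0)$ times a constant proportional to the disturbance amplitude, and Lemma~\ref{NODA_MIG_cross_frequency_bound} bounds the cross-group part of that amplitude by $\tfrac{\mu_0}{2\pi}\gamma_W\Phi_{\mathrm{AC}}(Z)$. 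Step 1 gives a strict Lyapunov function, which is uniformly exponentially stable locally. Combined with the perturbation bound, input-to-state stability then gives ultimately bounded errors that shrink as $\omega_0$ grows and grow with $\Phi_{\mathrm{AC}}$.
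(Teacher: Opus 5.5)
Your proposal has a genuine gap at the wheel-momentum limit \eqref{NODA_uniform_RW_distribution}. In your construction the wheel block of $K_{1,\ell}=\operatorname{diag}(k_pE,k_\sigma E,\bm0)$ is zero, and $V=\sum V_\ell$ has no wheel term. So neither $\dot V\leq0$ nor the LaSalle invariant set separates the uniform share from any other distribution with the same conserved total. Under a disjoint partition each group also conserves its own momentum; these are the $6(N_g-1)$ unspanned directions of Lemma~\ref{lemma:decentral_null_space}. Equalizing within each group would therefore still leave different per-wheel shares in different groups. The sentence ``the null-space parametrization penalizes unequal wheel shares through the velocity coordinates'' is an assertion, not a mechanism. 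The rank argument you defer would only show that equalizing directions are admissible, not that the closed loop moves along them.

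The paper adds a separate ingredient, an explicit wheel exchange. This is the implicit consensus step $(E+k_\xi T_u\,\mathcal L_h\otimes E_3)\bm h^{i+}=\bm h^i$ on the wheel-coupling graph. It does not increase $V$ for any $k_\xi T_u>0$, and it is the discretization of $\dot{\bm h}^i=-k_\xi(\mathcal L_h\otimes E_3)\bm h^i$, along which $\|\bm h^i\|^2/2$ is nonincreasing. On a connected graph the largest invariant set is then the zero-velocity-error set with position \emph{and} wheel-momentum consensus. Only at that point does conservation of total momentum fix the common value at $(\bm L^i-\bm L_d^i)/N_{\mathrm w}$.

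Several other steps also need repair.
\begin{itemize}
\item The overlap graph is not connected merely because $\widetilde A$ is connected and MIG partitions along its edges. MIG returns a partition, whose groups share no members, so connectivity of $\widetilde A$ says nothing about overlaps across carriers. You would have to assume it.
\item The spanning argument is unnecessary for positions and attitudes. If every group obeys \eqref{eq:group_error_dynamics}, the invariant set gives $K_{1,\ell}\delta\bm q_{s,\ell}=\bm0$ for each group, hence zero error for every member. The paper uses connectivity only to characterize that invariant set.
\item With the quadratic term $\tfrac{k_\sigma}{2}\|\delta\bm\sigma\|^2$, your $\dot V_\ell\leq-\delta\bm v_\ell^\top K_{2,\ell}\delta\bm v_\ell$ fails. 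The MRP kinematics give $\frac{d}{dt}\tfrac12\|\delta\bm\sigma\|^2=\tfrac{1+\|\delta\bm\sigma\|^2}{4}\delta\bm\sigma^\top\delta\bm\omega$, which leaves a sign-indefinite cross term. The paper uses $2k_\sigma\ln(1+\|\delta\bm\sigma_j\|^2)$, whose derivative is exactly $k_\sigma\delta\bm\sigma_j^\top\delta\bm\omega_j$.
\item In Step~3, your $V$ is not strict: $\dot V$ is only semidefinite, and the equilibria form a continuum indexed by the conserved momenta. The paper instead uses local exponential stability on the level sets of the conserved momenta and a converse Lyapunov theorem. That supplies the strict function needed for input-to-state stability against the residual bounded by \eqref{eq:MIG_AC_cut_penalty}. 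Your averaging argument otherwise matches the paper's.
\end{itemize}
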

\begin{proof}
Memberships overlap across carriers through \eqref{NODA_multi_frequency_dipole_moment} and superpose, and the averaged interaction \eqref{NODA_unified_averaged_interaction} is additive over carriers. Then, a spacecraft in several groups receives the sum of their commands, its wheels absorb the sum of their shares, and the sums below run over spacecraft. Let $V=\sum_\ell V_\ell$ with
$V_\ell=\frac{\delta\bm v_\ell^\top\overline M_\ell\delta\bm v_\ell}{2}
+\tfrac{k_p}{2}\sum_{j\in\mathfrak g_\ell}\|\delta\bm r_j\|^2
+2k_\sigma\sum_{j\in\mathfrak g_\ell}\ln(1+\|\delta\bm\sigma_j\|^2)$
\cite{takahashi2022kinematics}. Along \eqref{eq:group_error_dynamics},
the skew symmetry of $\dot{\overline M}_\ell-2\overline C_\ell$ and the
attitude kinematics cancel the $K_{1,\ell}$ terms, and between exchanges
$\dot V=-\sum_\ell\delta\bm v_\ell^\top K_{2,\ell}\delta\bm v_\ell\leq0$.
The wheel exchange is the implicit consensus step
$(E+k_\xi T_u\,\mathcal L_h\otimes E_3)\,\bm h^{i+}=\bm h^i$ with the Laplacian of the wheel-coupling graph $\mathcal L_h$. For any
$k_\xi T_u>0$, this 
does not increase $V$ and it becomes the consensus flow $\dot{\bm h}^i=-k_\xi(\mathcal L_h\otimes E_3)\bm h^i$, for which $\frac{\mathrm d}{\mathrm dt}\,\frac{\|\bm h^i\|^2}{2}
=-k_\xi\,\bm h^{i\top}(\mathcal L_h\otimes E_3)\bm h^i\leq 0$. Then, the connected graph makes the largest invariant set the zero-velocity-error set
with position and wheel-momentum consensus. LaSalle's theorem \cite{khalil2002nonlinear} gives
convergence, and the conserved total momentum fixes the wheel momenta
at \eqref{NODA_uniform_RW_distribution}. The $6(N_g-1)$ directions of Lemma~\ref{lemma:decentral_null_space} that a disjoint partition leaves unspanned carry the relative momenta of the groups.

For the AC-driven swarm, the interaction \eqref{Nonlinear_NODA_MMH_multifrequency_complete_interaction} is smooth on separated configurations and the Lipschitz amplitudes give a unique solution. With all carriers integer multiples of $\omega_0$, the closed loop is $T$-periodic with \eqref{NODA_unified_averaged_interaction} as its period average, and first-order averaging \cite{sanders2007averaging} keeps the true trajectories within $O(1/\omega_0)$ of the averaged ones over finite time. Since the equilibrium of \eqref{eq:group_error_dynamics} is locally exponentially stable on the level sets of the conserved momenta, a strict converse Lyapunov function exists, and input-to-state stability against the cross-group residual bounded by \eqref{eq:MIG_AC_cut_penalty} extends the error bound to
all time. The ODA solves a sequence of convex programs with a unique minimizer on separated configurations, whose solution is Lipschitz in the command and the geometry.
\end{proof}
\subsection{Calculation Mitigation with Frozen Approximation}
\label{sec:frozen}
The multiple carriers in
\eqref{Nonlinear_NODA_MMH_multifrequency_complete_interaction} require
small steps and are computationally expensive, motivating the
approximation. 
\begin{proposition}
\label{thm:frozen}
Consider every interaction matrix $Q_{j\leftarrow k}$ of \eqref{EMgrouping_dipolemodel} fixed at the
start of each interval and the closed loop integrated with the amplitudes fixed over
update intervals $T_u$ of whole common periods. Then, on a connected graph, its steady-state
position and attitude errors with the actual trajectory are
bounded by
\begin{equation}
\|\delta P\|\leq\frac{4k_{\max}q_{\max}\bar F}{k_p\lambda_2},
\qquad
\|\delta\bm\sigma_j\|\leq\frac{3q_{\max}\bar T_j}{k_\sigma},
\label{eq:frozen_bounds}
\end{equation}
where $q_{\max}=\max_ev_eT_u/d_e$ is the largest relative displacement
of an edge, $v_e$ bounds the relative speed
over the edge length $d_e$, $\bar F$ bounds the force of every magnetic edge,
$\bar T_j=\sum_{e\ni j}\|\bm\tau_{j,e}\|$ is the torque of the magnetic edges of spacecraft $j$, and $k_{\max}$ is the largest number of magnetic partners of a spacecraft, which exceeds $2k$ when more than $k$ spacecraft select the same neighbor.
\end{proposition}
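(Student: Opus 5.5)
The plan is to treat the frozen-geometry integration as the exact closed loop plus an additive wrench perturbation. Then we bound that perturbation per edge by a first-order expansion of $Q_{j\leftarrow k}$. Finally, we propagate it to steady-state errors through the static gains of the consensus loop \eqref{centralized_controller} and of the attitude loop in \eqref{eq:group_error_dynamics}.

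\textbf{Edge-level bound.} Within one interval of length $T_u$, the frozen model applies $Q_{j\leftarrow k}(\bm r_e(t_0))$ while the actual loop applies $Q_{j\leftarrow k}(\bm r_e(t))$. The dipole amplitudes are the same in both, since they are held over whole common periods. The relative position moves by at most $v_eT_u$, so the relative change of the edge geometry is at most $q_e=v_eT_u/d_e\leq q_{\max}$. The force block of \eqref{EMgrouping_dipolemodel} scales as $d^{-4}$ and the torque block as $d^{-3}$. A mean-value estimate on $d^{-4}$ gives a relative force error of at most $4q_e$ to first order, and $d^{-3}$ gives $3q_e$ for the torque. I expect the LOS rotation $C^{A/L}$ to contribute an angle $\leq q_e$, with bounded norm, so it is absorbed into these constants. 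I would verify this using the fact that $\Psi_f,\Psi_\tau$ have unit-order structure. The error force on edge $e$ is therefore bounded by $4q_{\max}\bar F$, and the error torque on spacecraft $j$ by $3q_{\max}\bar T_j$.

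\textbf{Aggregation to the swarm.} Spacecraft $j$ has at most $k_{\max}$ magnetic partners, so its force error is at most $4k_{\max}q_{\max}\bar F$. Over the intervals the perturbation is a bounded, piecewise-constant input $\bm w$. I would bound its effect in the mode space of $L$ rather than compounding the per-spacecraft bound. The translational loop is $m\ddot{\delta P}=-k_pL\delta P-k_vL\dot{\delta P}+\bm w$. The force errors of each edge are equal and opposite by momentum conservation, so $\bm w$ sums to zero over the swarm and is orthogonal to the consensus direction $\bm 1$. The steady state then satisfies $k_pL\delta P=\bm w$ on $\bm1^\perp$, which gives $\|\delta P\|\leq\|\bm w\|/(k_p\lambda_2)$. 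Inserting the aggregated force bound yields the first inequality in \eqref{eq:frozen_bounds}. For attitude, the gain $K_{1,\ell}$ is the decentralized term $k_\sigma$ on each MRP, with no Laplacian coupling. At the equilibrium of \eqref{eq:group_error_dynamics} the error torque balances $k_\sigma\delta\bm\sigma_j$, which gives $\|\delta\bm\sigma_j\|\leq3q_{\max}\bar T_j/k_\sigma$. The MRP map is $1$-Lipschitz near the origin, so the linearized balance gives the same bound.

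\textbf{Main obstacle.} The delicate step is justifying the "steady-state" balance, since the perturbation is time-varying and is not a true constant. I would use the exponential stability (ISS) of \eqref{eq:group_error_dynamics} already established in Theorem~\ref{NODA_theorem_overlapping_groups}, so that the error converges to a neighborhood set by the ultimate bound of the static gain. One must ensure that the norm used for $\|\bm w\|$ matches the constant $k_{\max}$ rather than $\sqrt N$. The plan is to interpret $\|\delta P\|$ per mode or per spacecraft, using the fact that $L$ restricted to $\bm1^\perp$ has inverse norm $1/\lambda_2$ in the induced $\infty$-compatible sense on bounded-degree graphs. If that fails, I would state the bound in the $2$-norm with $\|\bm w\|$ interpreted accordingly.
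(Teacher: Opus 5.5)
Your main line is the paper's proof, step for step. It uses first-order edge estimates $4q_e$ and $3q_e$ from the $d^{-4}$ and $d^{-3}$ laws, and equal-and-opposite force errors so that $\delta\bm F\perp\ker L$. It then sums at most $k_{\max}$ incident edges, inverts $k_pL\,\delta P=\delta\bm F$ through $\lambda_2$, and uses the uncoupled balance $k_\sigma\delta\bm\sigma_j=\delta\bm T_j$.

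Three points in that part need tightening.
\begin{enumerate}
\item The hypothesis of whole common periods is what makes the ripple terms $\bm d_{j\leftarrow k}^{2\omega_p}$ and $\bm d_{j\leftarrow k}^{pq}$ integrate to zero over each interval, which is why only the motion of $Q_{j\leftarrow k}$ remains. State this, rather than using the hypothesis only to equate amplitudes.
\item The line-of-sight rotation is not ``absorbed.'' Bounding it as a separate angular term and adding it by the triangle inequality would enlarge $4$ and $3$. The constants survive because the directional derivative of $Q$ is largest along the line of sight. For the force block, whose norm is $3\sqrt6/d^4$, a transverse displacement gives a derivative of norm $12\sqrt2/d^5$ against $12\sqrt6/d^5$ radially, and no oblique direction exceeds the radial value. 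For the torque block the values are $3\sqrt2/d^4$ against $3\sqrt5/d^4$.
\item An operator-norm bound on $\delta Q$ controls $\|\delta\bm f_e\|$ through the force capacity $\frac{\mu_0}{8\pi}\|Q_{j\leftarrow k}[1\!:\!3,:]\|\,\|\bm s_k\otimes\bm s_j+\bm c_k\otimes\bm c_j\|$, not through the delivered force. A torque-only edge delivers no force but has a nonzero frozen force error. So $\bar F$ must be read as that capacity; the paper's proof makes the same identification.
\end{enumerate}

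The genuine gap is the rescue you propose for the norm obstacle. The obstacle is real. $k_pL\,\delta P=\delta\bm F$ on $\bm1^\perp$ gives $\|\delta P\|_2\le\|\delta\bm F\|_2/(k_p\lambda_2)$, whereas $4k_{\max}q_{\max}\bar F$ bounds only each $\|\delta\bm F_j\|$, and the paper's proof passes from one to the other without comment. But $L^{+}$ restricted to $\bm1^\perp$ does not have induced $\infty$-norm $1/\lambda_2$, even on bounded-degree graphs. On the four-node path, $\lambda_2=2-\sqrt2$, and $\bm w=(1,1,-1,-1)^\top$ gives $L^{+}\bm w=(2,1,-1,-2)^\top$. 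That is an $\infty$-gain of at least $2>1/\lambda_2\approx1.71$.

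What the argument actually proves is the stated inequality for the root-mean-square error $\|\delta P\|_2/\sqrt N$, because $\|\delta\bm F\|_2/\sqrt N\le\max_j\|\delta\bm F_j\|$. For the stacked $2$-norm or the worst spacecraft, this argument needs an extra factor $\sqrt N$.

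Drop the ISS detour as well. The proposition concerns the steady-state balance, which the paper uses directly. An ultimate bound under a time-varying input is instead set by the $L_\infty$-gain of the closed loop. That gain exceeds the static gain $1/(k_p\lambda_2)$ when the consensus modes are underdamped, so the detour would not reproduce the stated constant.
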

\begin{proof}
The disturbances $\bm d_{j\leftarrow k}^{2\omega_p}$ and
$\bm d_{j\leftarrow k}^{pq}$ of
\eqref{Nonlinear_NODA_MMH_multifrequency_complete_interaction} have
zero mean over whole common periods, and the only remaining error
comes from the motion of $Q_{j\leftarrow k}$, at most $v_eT_u$ on an
edge $e=(j,k)$. For the force, the $d^{-4}$ law gives
$\|\partial Q/\partial\bm r\|\leq4\|Q\|/d$ and bounds the error of the
edge by $\|\delta\bm f_e\|\leq4q_e\|\bm F_e\|$ with $q_e=v_eT_u/d_e$.
This error acts as $+\delta\bm f_e$ on spacecraft $j$ and
$-\delta\bm f_e$ on spacecraft $k$, and the per-spacecraft errors
$\delta\bm F$ therefore sum to zero. Summing at most $k_{\max}$
incident edges gives $\|\delta\bm F_j\|\leq4k_{\max}q_{\max}\bar F$. In
steady state the position errors of \eqref{centralized_controller}
satisfy $k_pL\,\delta P=\delta\bm F$, the zero sum places
$\delta\bm F$ orthogonal to the kernel of $L$ on a connected graph,
and the smallest nonzero eigenvalue $\lambda_2$ of $L$ gives the first
bound. Likewise, the $d^{-3}$ law bounds the torque error of edge $e$
on spacecraft $j$ by $3q_e\|\bm\tau_{j,e}\|$. These errors are not
equal and opposite on the two members, and they act on each body
directly with $\|\delta\bm T_j\|\leq3q_{\max}\bar T_j$, and the attitude command
settles at $k_\sigma\delta\bm\sigma_j=\delta\bm T_j$.
\end{proof}
\section{Application to Spacecraft Swarm Reconfiguration}
\label{sec:evaluation}
\label{sec:eval_proposed}
\label{sec:full_AC_1000sat}
This section verifies the proposed framework on two illustrative
examples. Note that the simulation integrates the full dynamics by fourth-order
Runge--Kutta and re-forms the groups at every frame. 
The amplitudes are updated every common period $T=2\pi/\omega_0=0.53$~s
of the base carrier, and the step $\Delta t=T/(N_{\mathrm{sub}}\max_p\nu_p)$ resolves the highest carrier with $N_{\mathrm{sub}}$ sub-steps.
\label{sec:V_channels}
\subsection{Tower Reconfiguration for Framework Validation}
\label{sec:V_tuning}
Fig.~\ref{fig:tower_formation} shows the terminal formation of the
$1000$ spacecraft. Each spacecraft has $m_{\mathrm s}=20$~kg and an
isotropic inertia $J_{\mathrm s}=20$~kg\,m$^2$, and the dipoles are
uncapped, an idealization that ignores the actuator limits of a
flight unit. The loop delay is the hold delay $\tau_0/2$ alone with
$\tau_{\mathrm{comm}}=0$, the highest carrier is the $1521$st harmonic
of $\omega_0$ at $7.2$~kHz, and $N_{\mathrm{sub}}=20$.
\begin{figure}[tb!]
\centering
\subfloat[Oblique view.]{
\includegraphics[width=0.41\linewidth]
{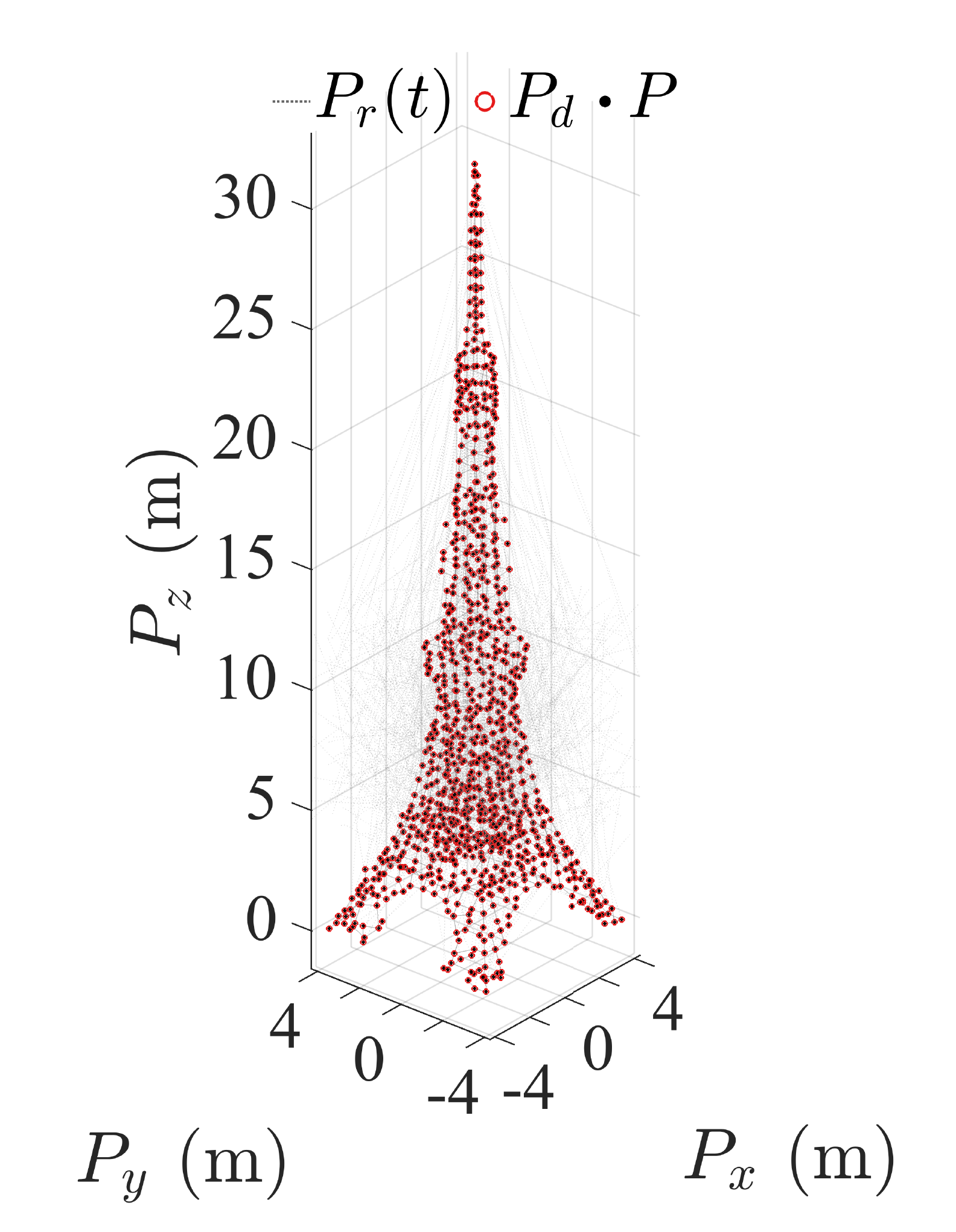}
\label{fig:tower_oblique}}
\subfloat[Top view.]{
\includegraphics[width=0.51\linewidth]
{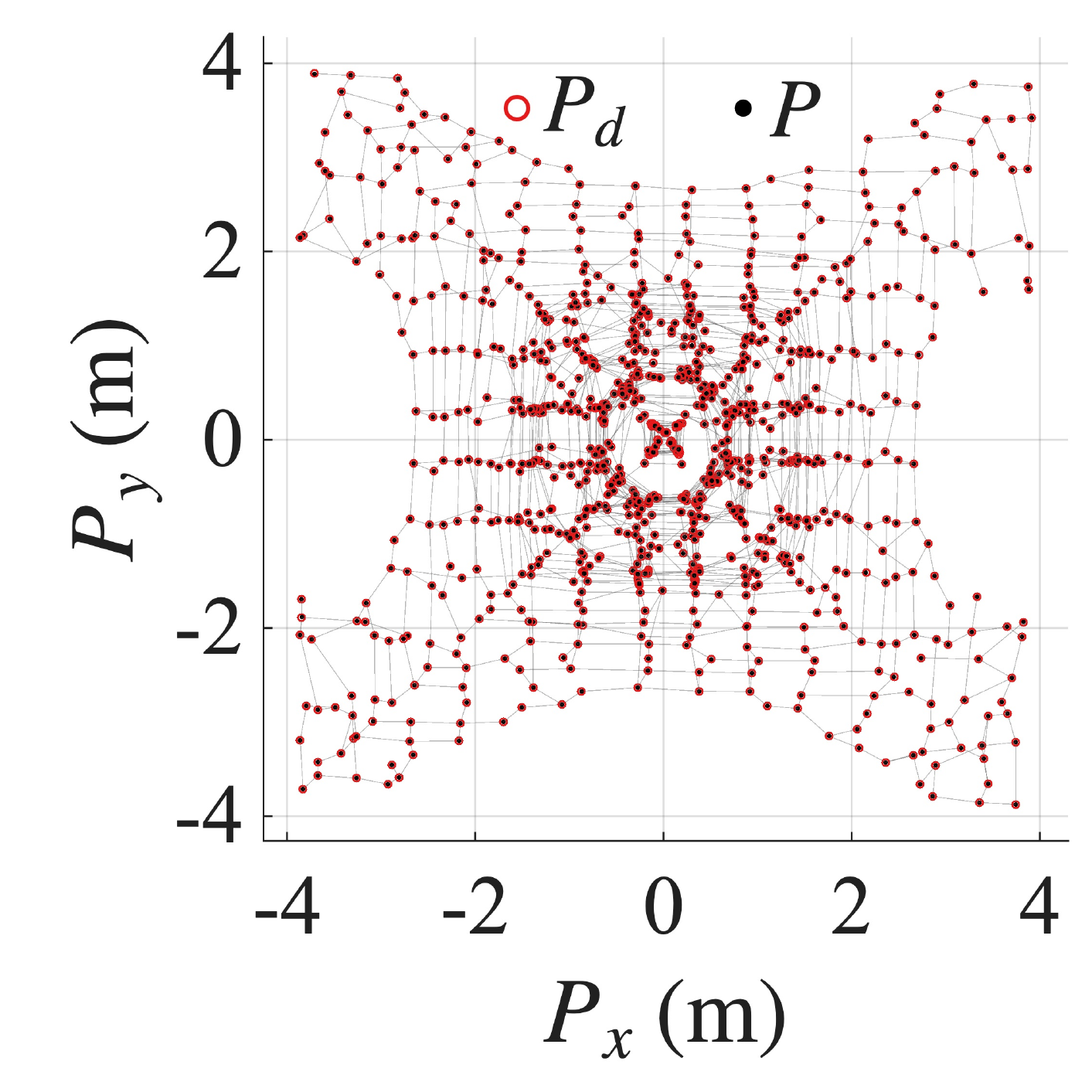}
\label{fig:tower_top}}
\caption{The terminal formation representing the Tokyo Tower (Tokyo Digital Twin Project, CC-BY).}
\label{fig:tower_formation}
\end{figure}
\subsubsection{MIG-Tuning of Group-Size Limit}
\label{sec:V_kinematics}
We sweep the group-size limit $n$ from two to seven under the fixed commanded graph in Fig.~\ref{fig:fullAC_1000sat_summary}, in which every spacecraft commands its eight nearest neighbors. The controller gains of \eqref{centralized_controller} are set to $(k_p,k_v,k_\sigma,k_\omega)=(0.1,0.4,1.2,1.2)$ at the $0.53$~s hold. The dipole power decreases monotonically with $n$ and $n{=}7$ operates at $40\%$ of the power of $n{=}2$ in Fig.~\ref{fig:fullAC_1000sat_summary}(b). MIG partitions that graph, and the peak number of carriers in Fig.~\ref{fig:fullAC_1000sat_summary}(f) falls from ${\sim}3400$ at $n{=}2$ to ${\sim}840$ at $n{=}7$. Against the best-of-$K$ inverse allocation with $K{=}1000$, the ODA lowers the total power in all $57$ frames of the reconfiguration at every $n$ (Table~\ref{tab:n234_ripple_gain}). The time-averaging error, the root-mean-square AC disturbance force of \eqref{Nonlinear_NODA_MMH_multifrequency_complete_interaction}, is not minimized and falls less uniformly over the $5.7\times10^{4}$ spacecraft-frames, since the intra-group bound of \cite{takahashi2024neural} ties the disturbance to the power without ordering two allocations of the same pair. The tracking improves monotonically with $n$, and the position error falls from $0.47$ to $0.09$ times the median spacing of $0.38$~m and from $2.6$ to $0.6$ for the worst spacecraft in Fig.~\ref{fig:fullAC_1000sat_summary}(c).
\begin{table}[tb!]
\centering
\caption{Gain over the best-of-$K$ inverse allocation ($K{=}1000$) in the closed-loop sweep, total power per frame and time-averaging error per spacecraft.}
\label{tab:n234_ripple_gain}
\begin{tabular}{|l|c|c|c|}
\hline
& $n{=}2$ & $n{=}3$ & $n{=}4$ \\ \hline
Power gain (median) & $1.15\times$ & $1.34\times$ & $1.58\times$ \\ \hline
Power gain (range) & $1.05$--$1.94\times$ & $1.23$--$2.29\times$ & $1.39$--$2.78\times$ \\ \hline
Error gain (median) & $1.04\times$ & $1.28\times$ & $1.54\times$ \\ \hline
Error gain (10--90\%) & $0.40$--$2.50\times$ & $0.58$--$2.94\times$ & $0.76$--$3.67\times$ \\ \hline
Lower error & $54\%$ & $73\%$ & $83\%$ \\ \hline
\end{tabular}
\end{table}
The wheel momenta converge toward \eqref{NODA_uniform_RW_distribution} and the power falls by a further factor of two, as the loop only holds the formation in Fig.~\ref{fig:fullAC_1000sat_summary}(e). 
\begin{figure*}[t!]
\centering
\begin{minipage}[t]{0.485\textwidth}
\vspace{0pt}
\centering
\subfloat[Legend.]{\includegraphics[width=0.49\linewidth]{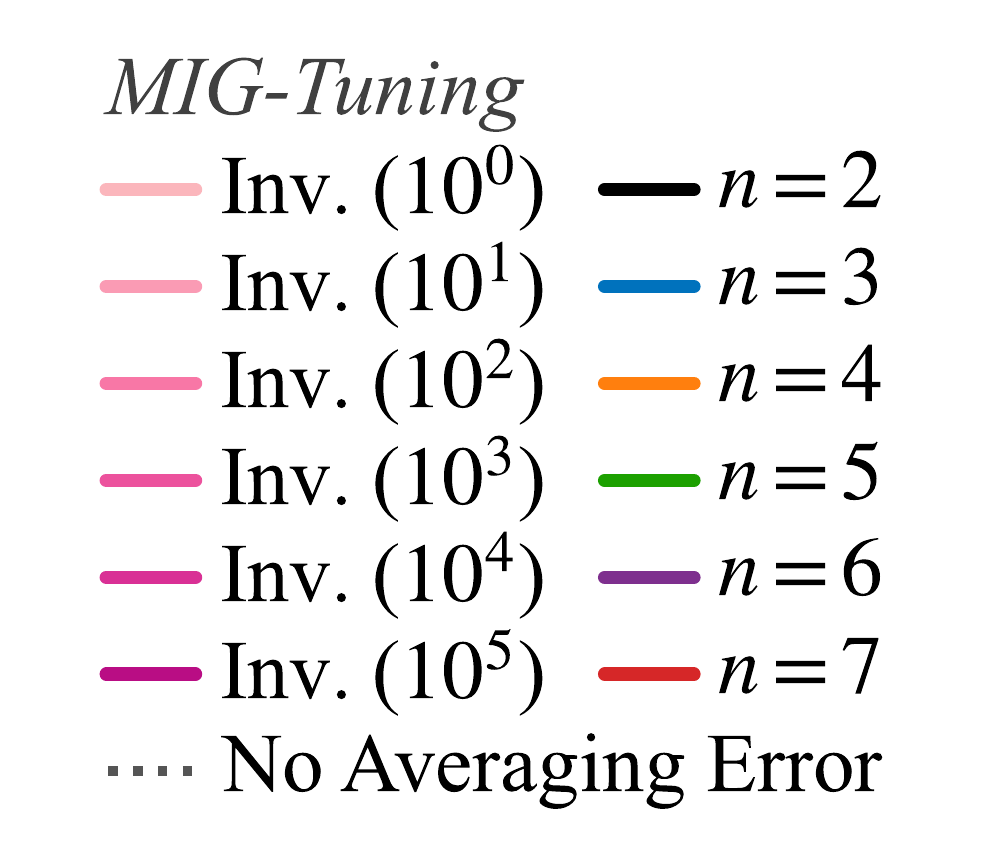}}\hfill
\subfloat[Energy ratio.]{\includegraphics[width=0.49\linewidth]{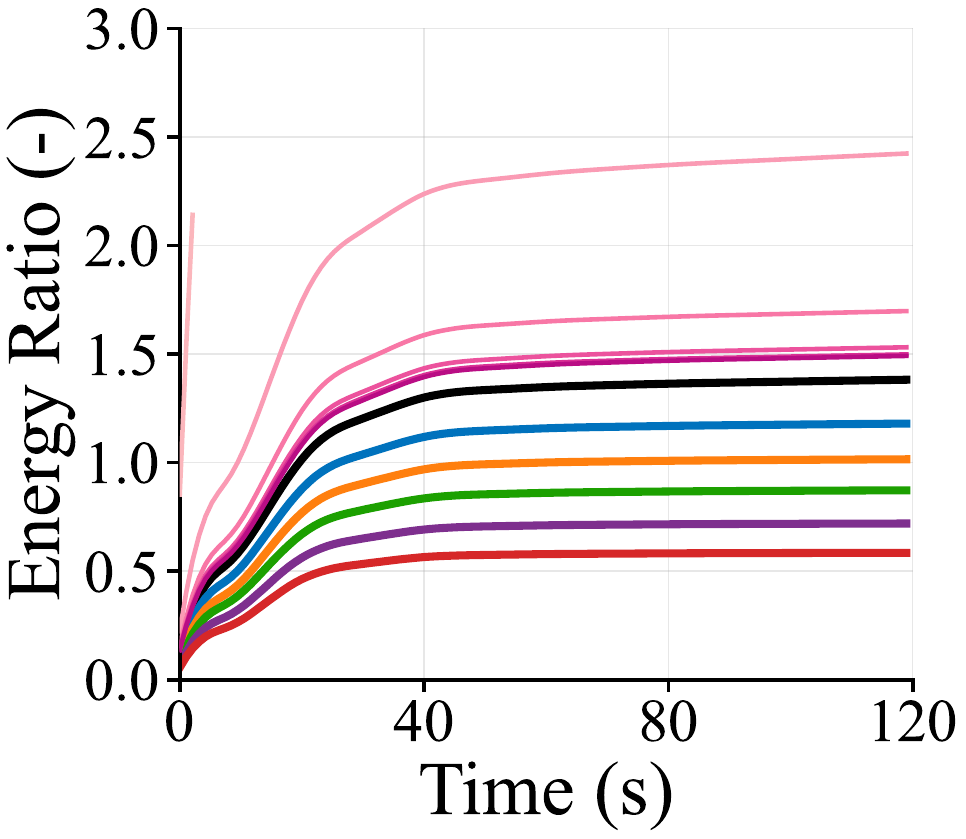}}\\
\subfloat[Position.]{\includegraphics[width=0.49\linewidth]{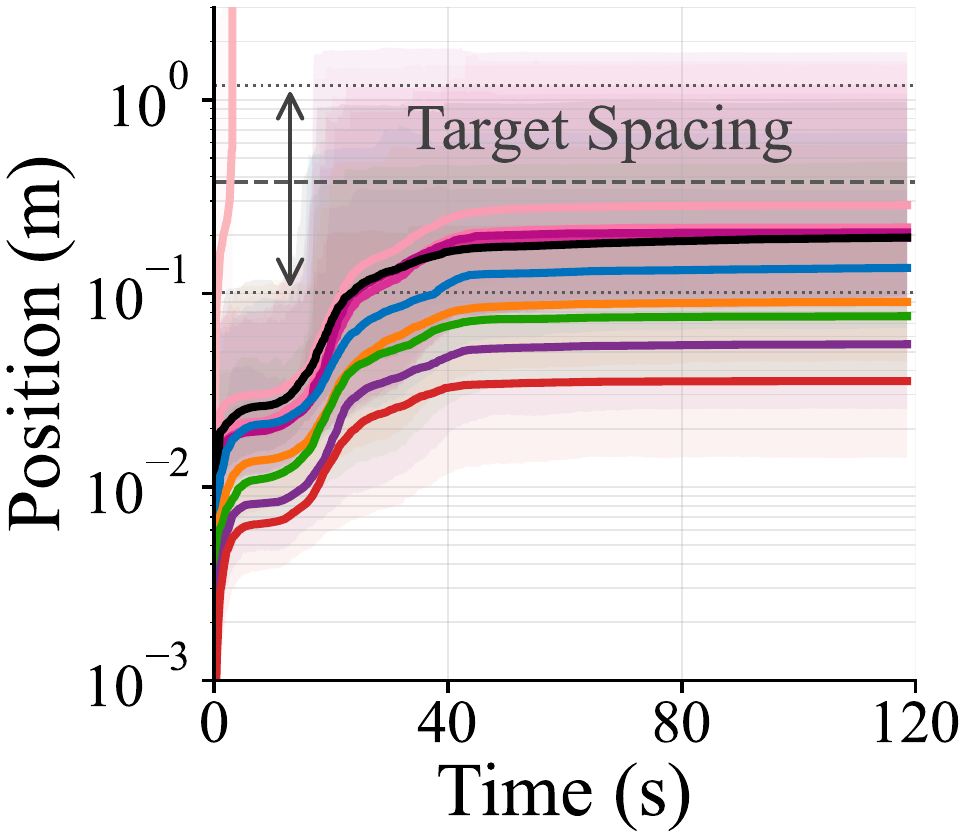}}\hfill
\subfloat[Attitude.]{\includegraphics[width=0.49\linewidth]{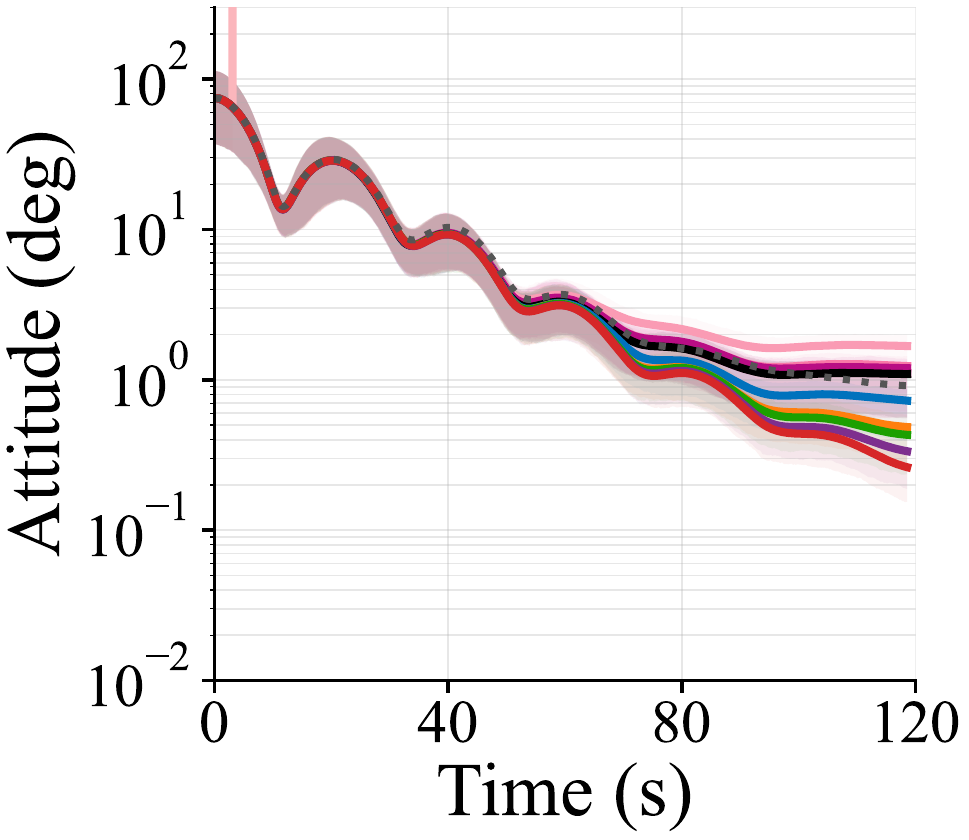}}\\
\subfloat[RW momentum.]{\includegraphics[width=0.49\linewidth]{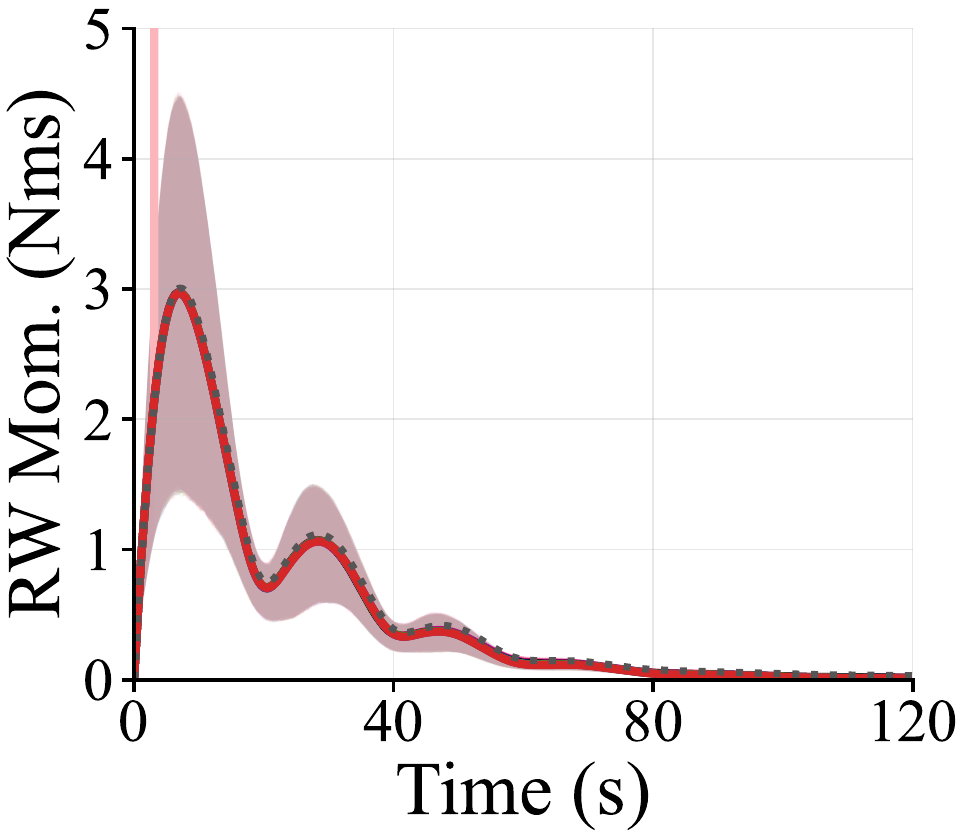}}\hfill
\subfloat[Distinct AC carriers.]{\includegraphics[width=0.49\linewidth]{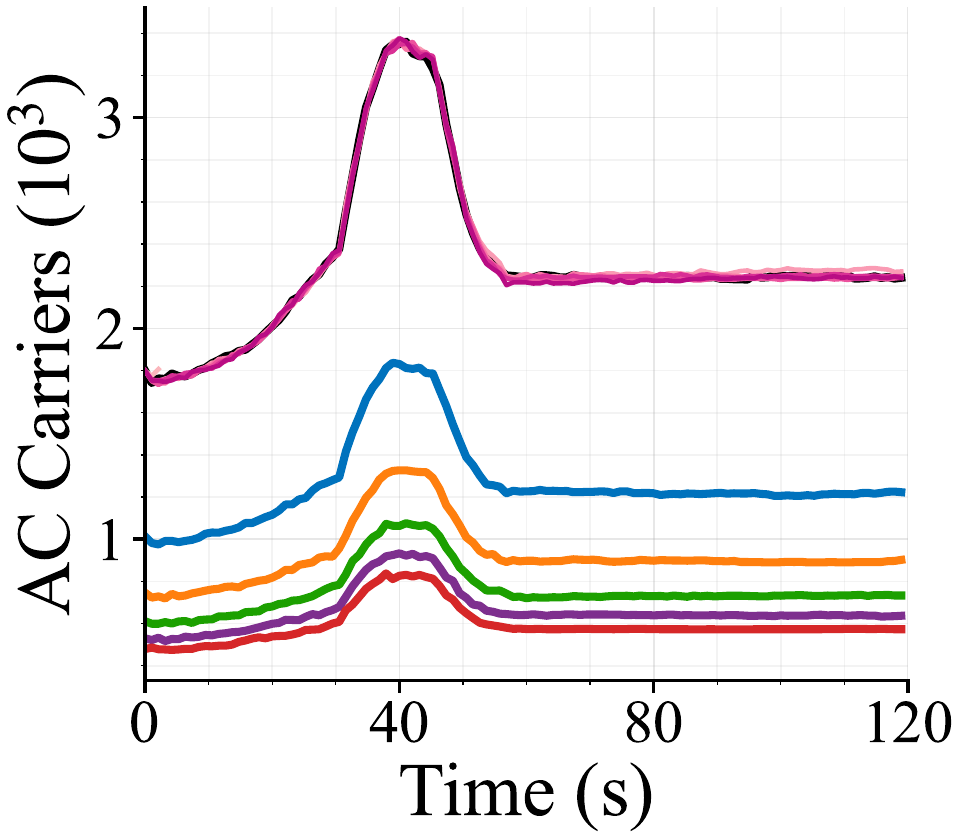}}
\caption{Reconfiguration for MIG-tuning in Section~V-A: closed loop (solid) against the averaged model (dotted).}
\label{fig:fullAC_1000sat_summary}
\label{fig:ac_frequency_schedule}
\label{Nonlinear_NODA_MMH_closed_loop_validation}
\end{minipage}\hfill
\begin{minipage}[t]{0.485\textwidth}
\vspace{0pt}
\centering
\subfloat[Legend.]{\includegraphics[width=0.49\linewidth]{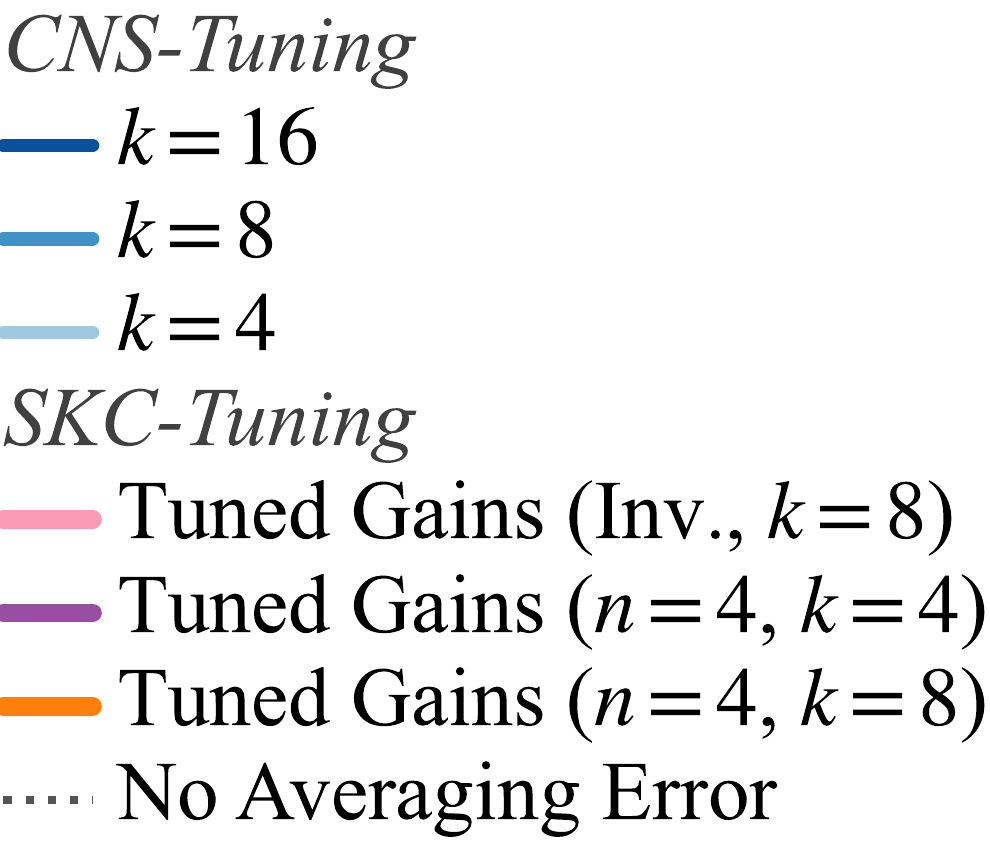}}\hfill
\subfloat[Energy ratio.]{\includegraphics[width=0.49\linewidth]{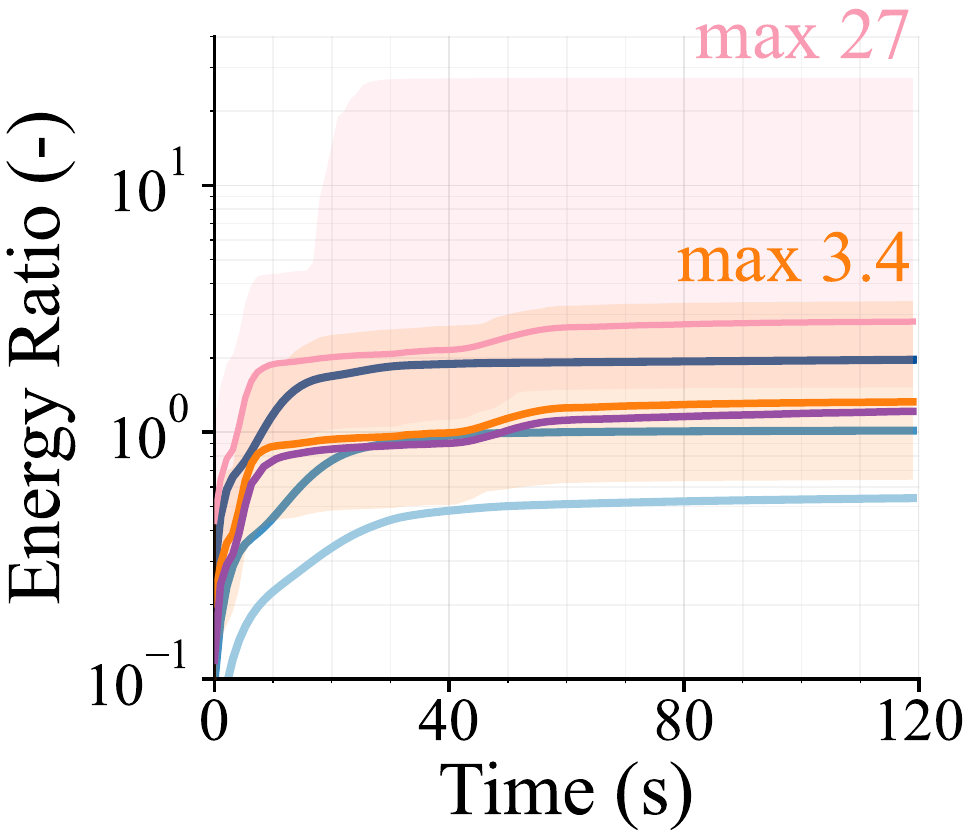}}\\
\subfloat[Position.]{\includegraphics[width=0.49\linewidth]{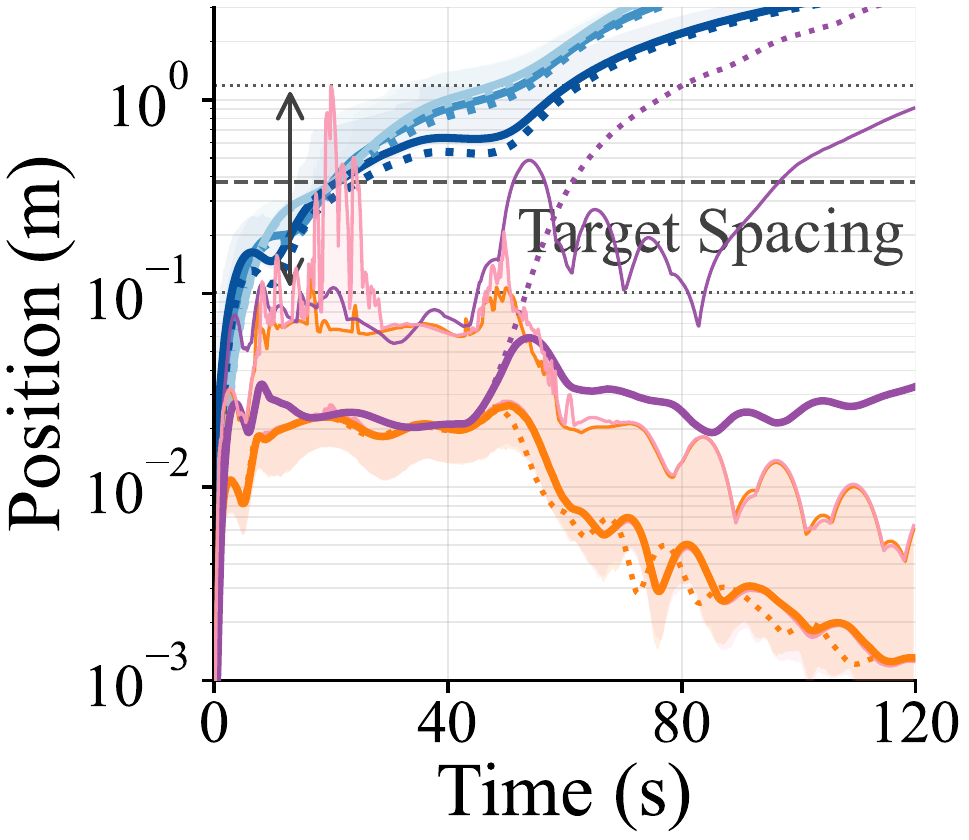}}\hfill
\subfloat[Attitude.]{\includegraphics[width=0.49\linewidth]{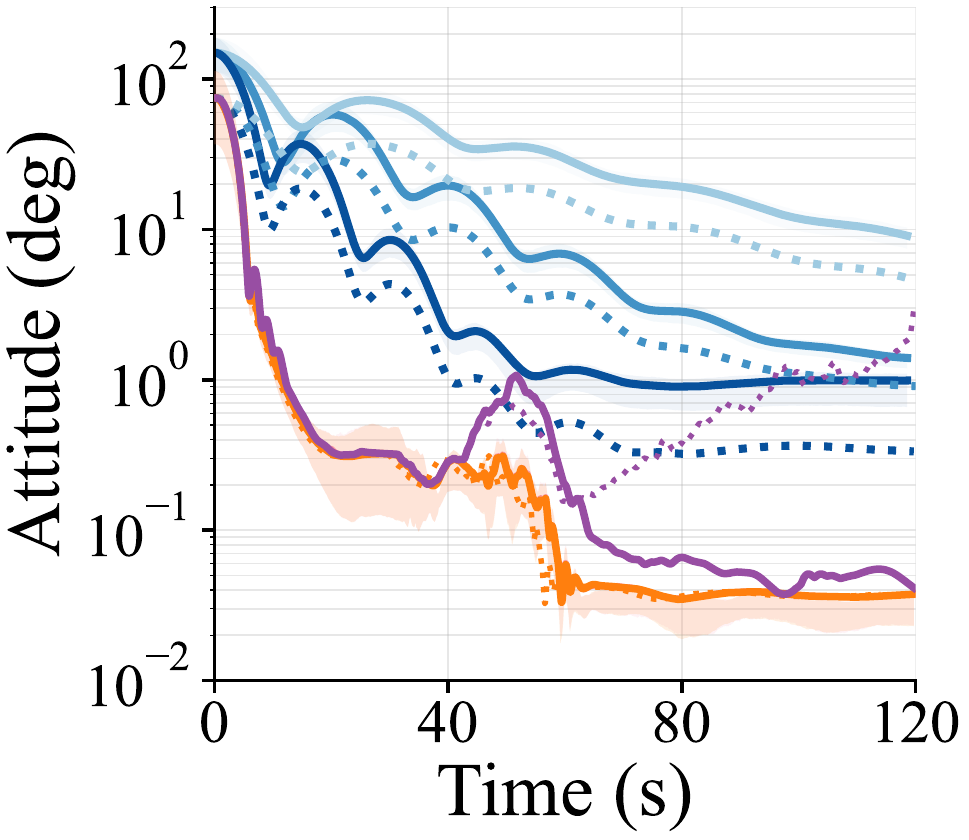}}\\
\subfloat[RW momentum.]{\includegraphics[width=0.49\linewidth]{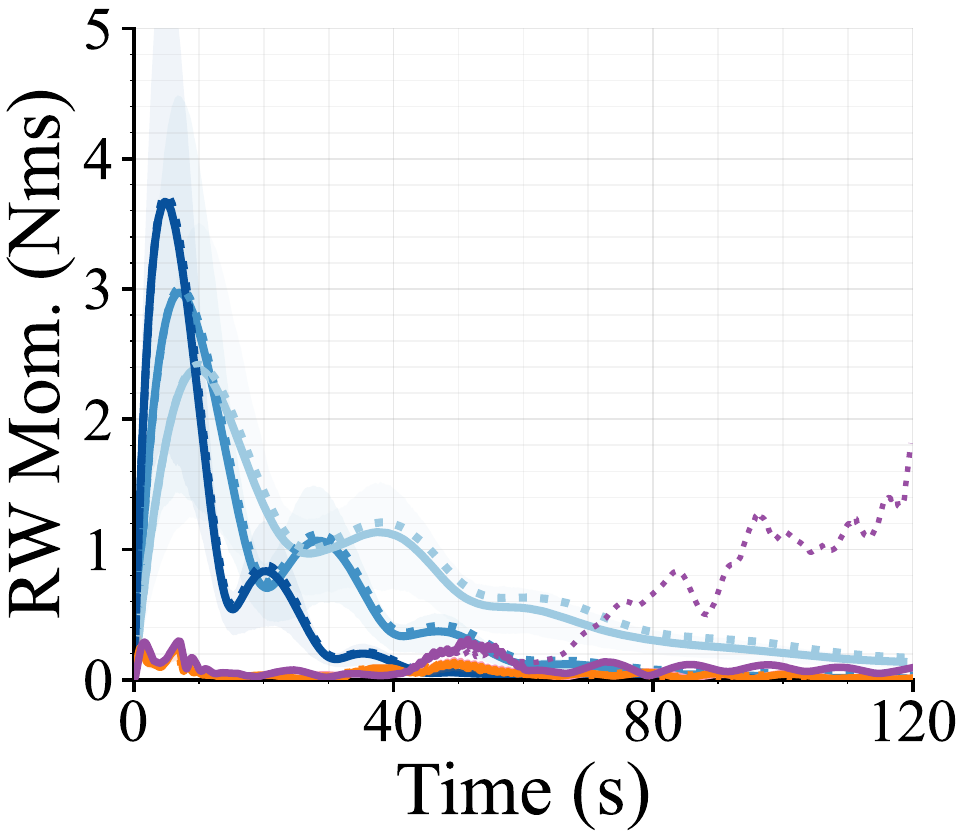}}\hfill
\subfloat[Distinct AC carriers.]{\includegraphics[width=0.49\linewidth]{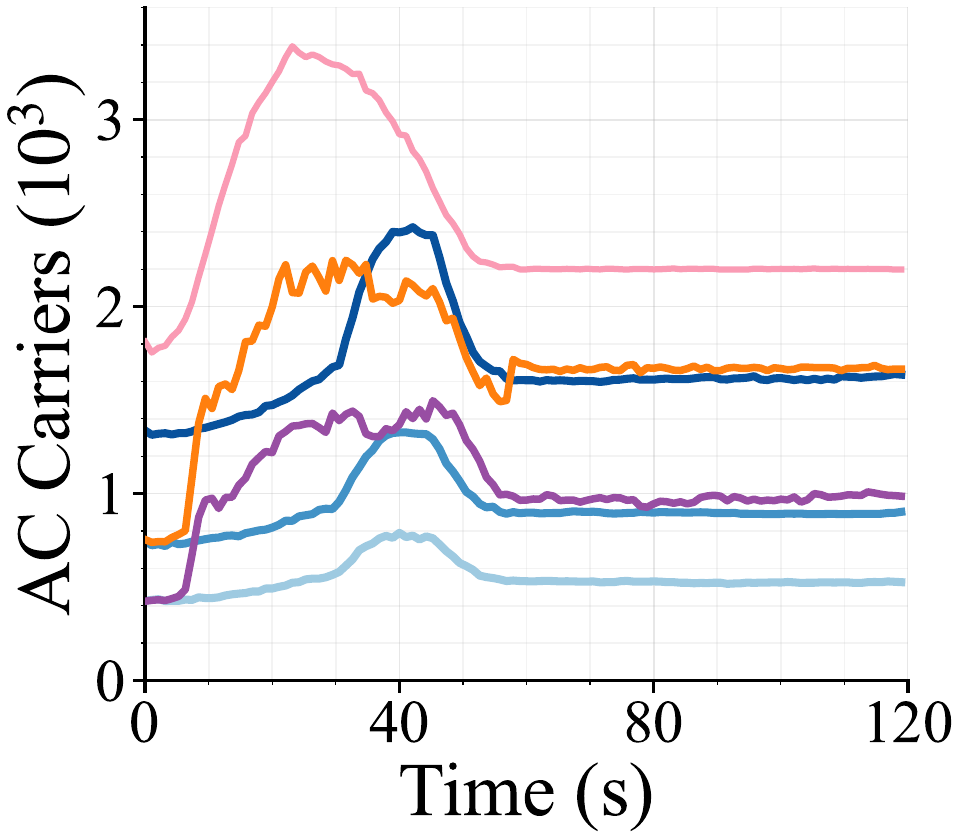}}
\caption{Reconfiguration for CNS- and SKC-tuning in Section~V-A, with the conventions of Fig.~\ref{fig:fullAC_1000sat_summary}.}
\label{fig:cns_ablation}
\label{fig:codesign_progress}
\end{minipage}
\end{figure*}
\subsubsection{CNS-Tuning of Neighbor Count}
\label{sec:V_ablation}
Criterion~2 of Section~IV-A floors the neighbor count first. Over the $601$ geometries of the tower
reconfiguration, at the consensus stiffness calibrated so that the worst geometry of $k{=}20$ diverges at $35$~ms, the worst
$\tau^{\mathrm{div}}$ rises monotonically with $k$, from $12.5$~ms at
$k{=}8$ and $27.5$~ms at $k{=}16$ to $42.5$~ms at $k{=}24$ (Fig.~\ref{fig:model_norm_blocks}(c)); a sparser graph tolerates
only a shorter hold at the same consensus rate.
With the grouping fixed at $n{=}4$, we sweep the symmetrized $k$-NN graph from $k{=}4$ to $k{=}16$ in Fig.~\ref{fig:cns_ablation}. A larger $k$ lengthens the mean effective distance $\langle d_{\mathrm{eff}}\rangle$ of \eqref{eq:deff_pooling}, and the power follows its fourth power within $9\%$ in Fig.~\ref{fig:cns_ablation}(b). 
A larger $k$ also raises $\lambda_2$ at the cost of this power and of more carriers. The worst tracking error among all spacecraft at $t{=}60$~s is $1.7$~m at $k{=}16$, whereas it is $2.3$~m at $k{=}4$, and by $t{=}120$~s the two differ twofold, $4.3$ against $8.3$~m.
Even so, the consensus is slow at every $k$, and the $k$-NN graph gives $\lambda_2=0.0092$ at $k{=}8$ in the terminal formation. We therefore add $150$ information edges of Criterion~3, $3\%$ of the edge list, drawn once
uniformly at random among the pairs more than $8$~m apart in the
terminal formation, which raise $\lambda_2$ to $0.27$.
\subsubsection{SKC-Tuning of Controller Gains}
\label{sec:V_codesign}
We finally tune the gains of \eqref{centralized_controller} under the grouping fixed at $n{=}4$ and the graph fixed at $k{=}8$. Since MIG of $n{=}4$ cuts the peak carrier count from ${\sim}3400$ to $1327$, \eqref{eq:delay_bound} admits a $2.6\times$ shorter hold, from $0.53$~s to $0.21$~s, and higher gains. The gains are tuned on the linearized averaged model from an LQR initialization to minimize the error within the mission time, and the selected set $(k_p,k_v,k_\sigma,k_\omega)=(5.0,4.0,56,5.0)$ is $4$--$50$ times the gains of the sweeps.
Against the mean target spacing $\bar d(t)$, every lane of Sections~V-A-1 and V-A-2 peaks at
$1.2$--$5.0\,\bar d$, whereas the tuned loop peaks at $0.24\,\bar d$
and its tracking error settles to $6$~mm by $120$~s. The same gains diverge on the sparser $k{=}4$ graph. In the averaged model the worst-spacecraft error reaches $3.9$~m at $t=60$~s and $89$~m at $t=120$~s, against $0.02$~m at $t=60$~s on $k{=}8$.
\begin{remark}
\label{rem:hold}
The gain limit scales inversely with the amplitude hold. In the
averaged model with $k{=}8$, the position loop diverges at
$k_v=6.4$, $16$, and $26$ ($k_p=k_v/4$) for holds $\tau_0=0.53$,
$0.21$, and $0.13$~s, respectively.
With the hold delay $\tau=\tau_0/2$, the bound
\eqref{eq:delay_bound} reads $(k_v/m_{\mathrm s})\lambda_N\tau_0<\pi$,
and the measured divergence boundary
$(k_v/m_{\mathrm s})\lambda_N\tau_0\approx2.7$--$2.8$ follows it at all
three holds.
\end{remark}
\subsubsection{Tower Reconfiguration with Tuned Settings}
\label{sec:V_production}
Finally, we run the tower reconfiguration with the tuned settings ($n{=}4$, $k{=}8$ with the $150$ information edges, $0.21$~s hold), shown in orange in Fig.~\ref{fig:codesign_progress}. The magnetic interaction forces realize the commanded forces to a median relative error of $4\times10^{-9}$, and the resulting trajectory follows the averaged model (\ref{NODA_unified_averaged_interaction}) within $0.40$~mm for the median spacecraft over the run 
while the worst tracking error of the two loops differs by less than $1\%$.
 The wheel exchange achieves uniform distribution of \eqref{NODA_uniform_RW_distribution}, which validates Theorem~\ref{NODA_theorem_overlapping_groups} (Fig.~\ref{fig:codesign_progress}(e)).
We compare this result with the inverse allocation for $K\leq$~10, shown in pink in Fig.~\ref{fig:codesign_progress} under the same settings. The inverse allocation carries $2.3$ times the disturbances $\bm d_{j\leftarrow k}^{2\omega_p}$ and $\bm d_{j\leftarrow k}^{pq}$ of
\eqref{Nonlinear_NODA_MMH_multifrequency_complete_interaction} and
$2.4$ times the power. Over the sweep $n\in[2,7]$, the duality-gap ratio $\gamma_W$ of Proposition~\ref{prop:matching_certificate} is $1.0$ at $n{=}2$ \cite{takahashi2026power} and $1.11$--$1.43$ for $n\in[3,7]$ with $\max<2.2$ (Table~\ref{tab:duality_gap}), in agreement with \cite{takahashi2024neural}. 
\begin{table}[tb!]
\centering
\caption{Empirical certificate
$\gamma_W=W_{\mathrm{ODA}}/J_{\mathrm{dual}}$.}
\label{tab:duality_gap}
\begin{tabular}{|l|c|c|c|c|c|c|}
\hline
& $n{=}2$ & $n{=}3$ & $n{=}4$ & $n{=}5$ & $n{=}6$ & $n{=}7$ \\ \hline
Count & $130598$ & $97483$ & $60378$ & $37857$ & $22966$ & $13291$ \\ \hline
Mean & $1.000$ & $1.110$ & $1.218$ & $1.315$ & $1.382$ & $1.431$ \\ \hline
Std & $0.000$ & $0.128$ & $0.190$ & $0.228$ & $0.237$ & $0.239$ \\ \hline
Max & $1.000$ & $1.533$ & $1.812$ & $2.006$ & $2.087$ & $2.133$ \\ \hline
\end{tabular}
\end{table}
\begin{remark}
\label{rem:radius}
The $d^{-4}$ decay decouples distant groups, and the cross-group force and torque disturbances of \eqref{eq:MIG_AC_cut_penalty} on the densest frame evaluated within $R^\star=5$~m reach $0.998$ and $0.994$ of their values at $8$~m, whereas $R^\star=3$~m leaves $0.994$ and $0.958$. We drop interactions beyond $R^\star=5$~m.
\label{rem:carriers}
Carriers are distinct within $R^\star$ and reused beyond it.
\end{remark}
\subsection{Orbital Reconfiguration with Frozen Approximation}
\label{sec:V_orbit}
\label{sec:V_orbit_case}
We next consider spacecraft swarm reconfiguration on an Earth orbit for the $64$ spacecraft of $1.33$~kg, each with three orthogonal coils of $200$ turns, $20~\Omega$, and $5$~cm radius, so that a dipole amplitude $\mu$ costs $R\mu^2/(2N^2A^2)=4.1$~W per (A\,m$^2$)$^2$. These move from a staggered $4^3$ cube onto an $8^2$ grid
of $0.3$~m pitch on a circular orbit at $500$~km altitude and
inclination $30^\circ$ with the linearized dynamics
\cite{takahashi2025distance}. Each spacecraft starts with $5^\circ$ in
attitude error and at rest in angular rate. The amplitudes are fixed during $T_u=59.8$~s, with the carriers on multiples of $4.75$~Hz. 
\subsubsection{Accuracy of the Frozen Approximation}
Our orbital reconfiguration takes $24$ orbits, including the initial formation for one orbit, transfers over $20$ orbits, and the final grid for three orbits. Since one orbit $T_{\mathrm{orb}}$ takes $94.7$~min, neighbors move
relative to each other by $q_e\approx2\pi T_u/T_{\mathrm{orb}}\approx7\%$
of their distance $d_e$ per $59.8$~s hold, with $19\%$ at most over
the magnetic edges. This keeps the position bound of
\eqref{eq:frozen_bounds} below $81$~mm on this trajectory and $18$~mm at the median hold, with $k_{\max}=15$ at $k=6$, the largest edge force of the minimum-norm routing $\bar F=56$~nN at the median hold, and $\lambda_2=1.15$. The measured difference between the frozen and carrier-resolving loops is $0.59$~mm at the median hold and $2.0$~mm at most over all $2261$ holds, with the attitudes agreeing to $2\times10^{-5}$ in the attitude parameter.
\subsubsection{Coordination Settings for MIG, CNS, and SKC}
For MIG and CNS, we set $n{=}4$ and $k{=}6$ with $26$ routed information edges, which raise $\lambda_2$ from $0.23$ to $1.03$ at the initial formation and from $0.28$ to $1.15$ at the median hold. For SKC, an offset along the nadir direction meets a negative stiffness $f_{\mathrm{orbit}}$ per unit mass and drifts along the orbital motion \cite{takahashi2025distance}, and the gains of \eqref{centralized_controller} must therefore exceed $f_{\mathrm{orbit}}$ on the slowest mode and respect the hold delay on the stiffest, which gives the window $\frac{f_{\mathrm{orbit}}\,m_{\mathrm s}}{\lambda_2}<k_p<\frac{m_{\mathrm s}}{\lambda_NT_u^{2}}$. The $26$ information edges widen this window from $1.5\times$ to $5.7\times$. To satisfy these constraints, we select $k_p=1.6\times10^{-5}$~N/m and $k_v=1.0\times10^{-3}$~N\,s/m inside $4.8\times10^{-6}<k_p<2.8\times10^{-5}$~N/m, which gives $(k_v/m_{\mathrm s})\lambda_NT_u=0.6$. Likewise, the hold delay bounds the attitude gains by $k_\sigma<J_{\mathrm s}/T_u^2$ and $k_\omega<J_{\mathrm s}/T_u$ for the spacecraft inertia $J_{\mathrm s}$, and we set $k_\sigma=0.6J_{\mathrm s}/T_u^2=3.7\times10^{-7}$~N\,m and $k_\omega=0.6J_{\mathrm s}/T_u=2.2\times10^{-5}$~N\,m\,s for $J_{\mathrm s}=2.2\times10^{-3}$~kg\,m$^2$.
\subsubsection{Numerical Results}
The calculation result in Fig.~\ref{fig:orbit_snapshots} holds the formation to
$0.61$~mm in the median 
under a commanded force of $64$~nN in the median and $1.1$~$\mu$N at most.
The attitude of the worst spacecraft falls below $1^\circ$ within $15$~min of the $20$~min slew reference, and after the slew the swarm
median settles at $6\times10^{-5}$~degrees with the worst spacecraft
below $0.005^\circ$ on every hold (Fig.~\ref{fig:orbit_snapshots}(e)). The
wheels absorb the angular momentum that the reconfiguration releases
and converge to the uniform share of \eqref{NODA_uniform_RW_distribution}, $\|\bm L-\bm L_d\|/N_{\mathrm w}=0.628$~mNms. At the end of the mission the median wheel holds this share and the farthest one $0.636$~mNms,
$1.3\%$ apart (Fig.~\ref{fig:orbit_snapshots}(f)). The coil power of
Fig.~\ref{fig:orbit_snapshots}(d) is $74$~mW per spacecraft averaged over the mission and $3.4$~W at the peak of the worst spacecraft.
\begin{figure}[tb!]
\centering
\subfloat[$t=0$.]{\includegraphics[width=0.32\linewidth]{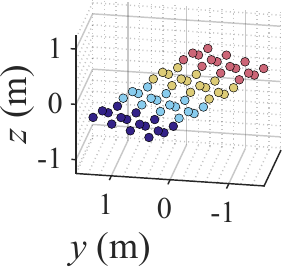}}\hfill
\subfloat[$t=0.78$~d.]{\includegraphics[width=0.32\linewidth]{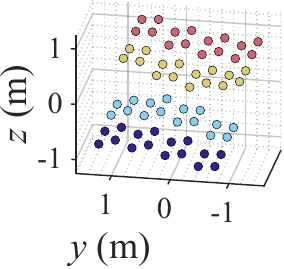}}\hfill
\subfloat[$t=1.56$~d.]{\includegraphics[width=0.32\linewidth]{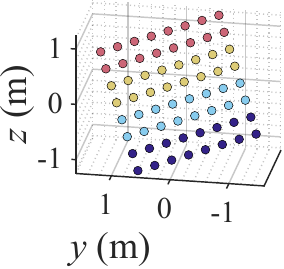}}\\
\vspace{0.02cm}
\subfloat[Coil power.]{\includegraphics[width=0.32\linewidth]{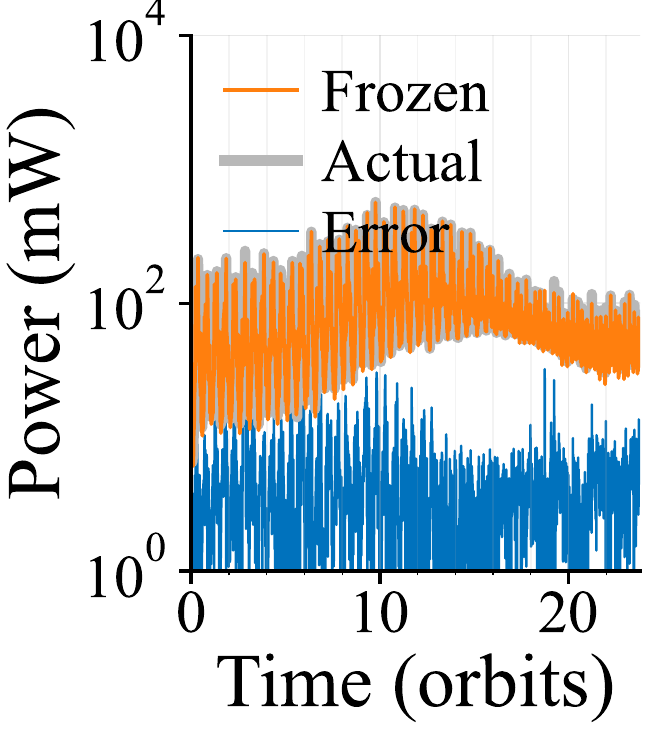}}\hfill
\subfloat[Attitude error.]{\includegraphics[width=0.32\linewidth]{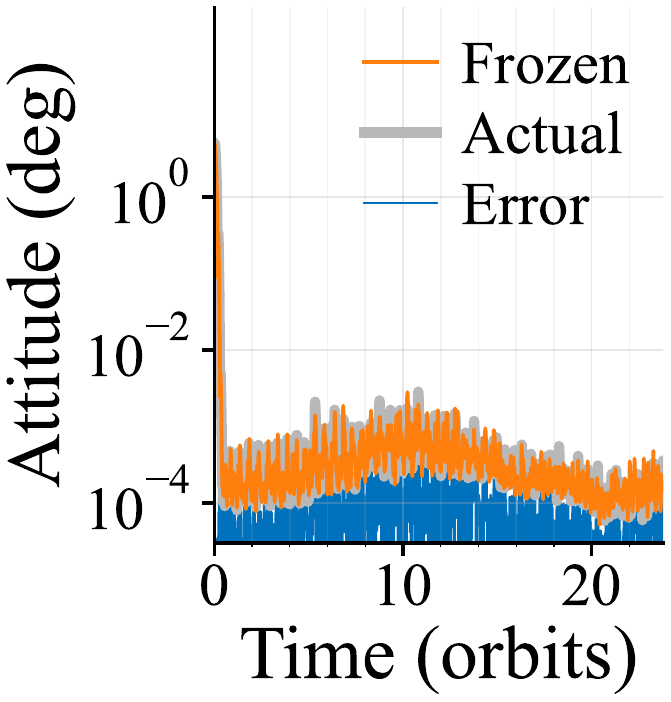}}\hfill
\subfloat[RW momentum.]{\includegraphics[width=0.32\linewidth]{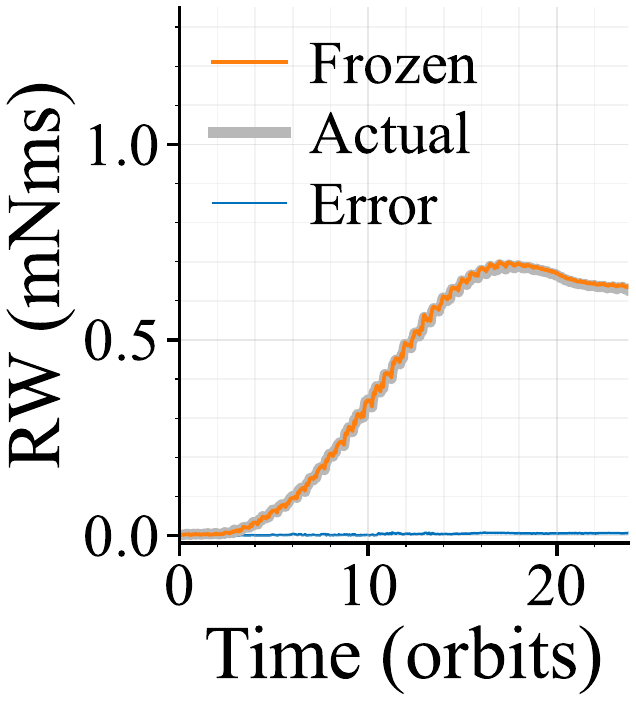}}
\caption{Orbital reconfiguration of $64$ spacecraft ($y$ along-track, $z$ normal, viewed from the radial direction).}
\label{fig:orbit_snapshots}
\label{fig:orbit_2weeks}
\end{figure}
\section{Conclusions}
\label{sec:conclusions}
This paper developed a decentralized power-optimal coordination
framework for magnetically actuated spacecraft swarms. The designed commands of overlapping groups converge while preserving momentum. Grouping reduces the number of carriers and
admits higher consensus gains. The power-minimizing allocation keeps
the time-averaging error small enough to hold the formation. The
approximate integration with a bounded
error carries the framework to a long-horizon orbital
reconfiguration.
\section*{Acknowledgment}
The first author would like to thank Prof. Hiraku Sakamoto at the
Institute of Science Tokyo for technical discussions. The authors
acknowledge the Tokyo Digital Twin Project for providing the open 3D building point-cloud data used for visualization in this paper. The authors used Claude to improve the English text.